\documentclass[10pt]{article}

\usepackage{hyperref}
\usepackage{enumerate}
\usepackage{amsmath,amsthm,amssymb}
\usepackage{fullpage}
\usepackage[all=normal,bibliography=tight]{savetrees}
\bibliographystyle{plainurl}
\def\boxit#1{\vbox{\hrule\hbox{\vrule\kern4pt
  \vbox{\kern1pt#1\kern1pt}
\kern2pt\vrule}\hrule}}

\newtheorem{lemma}{Lemma}
\newtheorem{theorem}[lemma]{Theorem}
\newtheorem{definition}[lemma]{Definition}

\newtheorem{claim}[lemma]{Claim}
\newcommand{\Oh}{\mathcal{O}}

\usepackage{graphicx}
\usepackage[absolute]{textpos}

\def\cqedsymbol{\ifmmode$\lrcorner$\else{\unskip\nobreak\hfil
\penalty50\hskip1em\null\nobreak\hfil$\lrcorner$
\parfillskip=0pt\finalhyphendemerits=0\endgraf}\fi}

\newcommand{\cqed}{\renewcommand{\qed}{\cqedsymbol}}


\begin{document}

\title{Multi-budgeted directed cuts%
\thanks{This research is a part of projects that have received funding from the European Research Council (ERC) under the European Union's Horizon 2020 research and innovation programme under grant agreements No 714704 (S. Li and M. Pilipczuk), 280152 (D. Marx), and 725978 (D. Marx).
M. Wahlstr\"{o}m is supported by EPSRC grant EP/P007228/1.}}

\author{Stefan Kratsch\thanks{Institut f\"{u}r Informatik, Humboldt-Universit\"{a}t zu Berlin, Germany, \texttt{kratsch@informatik.hu-berlin.de}.} \and
  Shaohua Li\thanks{Institute of Informatics, University of Warsaw, Poland, \texttt{Shaohua.Li@mimuw.edu.pl}.}
  \and
  D\'{a}niel Marx\thanks{Institute for Computer Science and Control, Hungarian Academy of Sciences (MTA SZTAKI), Hungary, \texttt{dmarx@cs.bme.hu}.}
  \and
  Marcin Pilipczuk\thanks{Institute of Informatics, University of Warsaw, Poland, \texttt{malcin@mimuw.edu.pl}.}
  \and
  Magnus Wahlstr\"{o}m\thanks{Royal Holloway, University of London, UK, \texttt{Magnus.Wahlstrom@rhul.ac.uk}.}}

  \date{}

\maketitle

\begin{textblock}{20}(0, 12.0)
\includegraphics[width=40px]{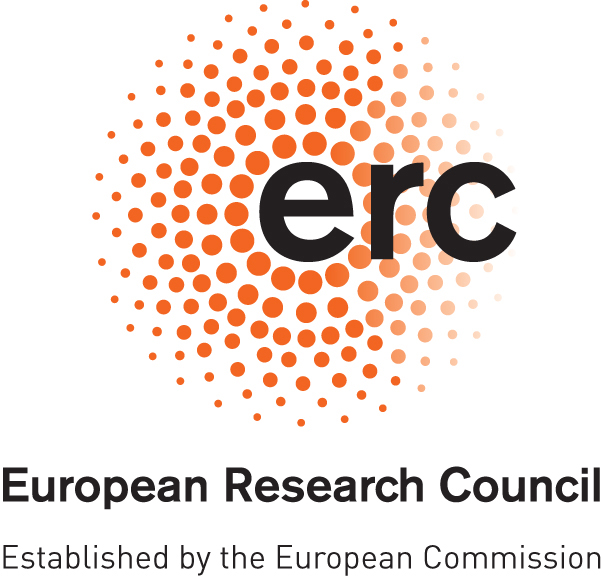}%
\end{textblock}
\begin{textblock}{20}(0, 12.9)
\includegraphics[width=40px]{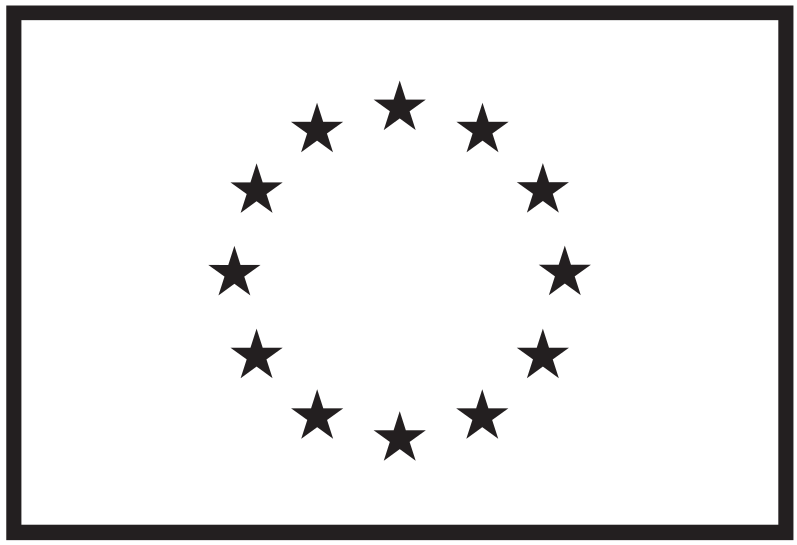}%
\end{textblock}

\begin{abstract}
In this paper, we study multi-budgeted variants of the classic minimum cut problem and graph separation problems that turned out to be important in parameterized complexity:
\textsc{Skew Multicut} and \textsc{Directed Feedback Arc Set}.
In our generalization, we assign colors $1,2,...,\ell$ to some edges and give separate budgets $k_{1},k_{2},...,k_{\ell}$ for colors $1,2,...,\ell$.
For every color $i\in \{1,...,\ell\}$, let $E_{i}$ be the set of edges of color $i$.
The solution $C$ for the multi-budgeted variant of a graph separation problem not only needs to satisfy the usual separation requirements
(i.e., be a cut, a skew multicut, or a directed feedback arc set, respectively), but also needs to satisfy that $|C\cap E_{i}|\leq k_{i}$ for every $i\in \{1,...,\ell\}$.

Contrary to the classic minimum cut problem, the multi-budgeted variant turns out to be NP-hard even for $\ell = 2$.
We propose FPT algorithms parameterized by $k=k_{1}+...+k_{\ell}$ for all three problems.
To this end, we develop a branching procedure for the multi-budgeted minimum cut problem that measures the progress of the algorithm not by reducing $k$ as usual, by but elevating the capacity
of some edges and thus increasing the size of maximum source-to-sink flow.
Using the fact that a similar strategy is used to enumerate all important separators of a given size, we merge this process with the flow-guided branching and show an FPT bound on the number of
(appropriately defined) important multi-budgeted separators. This allows us to extend our algorithm to the \textsc{Skew Multicut} and \textsc{Directed Feedback Arc Set} problems.

Furthermore, we show connections of the multi-budgeted variants with weighted variants of the directed cut problems and the \textsc{Chain $\ell$-SAT} problem, whose parameterized complexity
remains an open problem. We show that these problems admit a bounded-in-parameter number of ``maximally pushed'' solutions (in a similar spirit as important separators are maximally pushed),
giving somewhat weak evidence towards their tractability.
\end{abstract}

\section{Introduction}

Graph separation problems are important topics in both theoretical area and applications. Although the famous minimum cut problem is known to be polynomial-time solvable, many well-known variants are NP-hard, which are intensively studied from the point of view of approximation \cite{DBLP:conf/stoc/AgarwalAC07,DBLP:journals/siamdm/ChekuriGN06,DBLP:journals/algorithmica/EvenNSS98,DBLP:journals/siamcomp/GargVY96,DBLP:journals/jal/GargVY04,
DBLP:journals/mor/KargerKSTY04} and, what is more relevant for this work, parameterized complexity.

The notion of important separators, introduced by Marx~\cite{DBLP:journals/tcs/Marx06},
turned out to be fundamental for a number
of graph separation problems such as \textsc{Multiway Cut}~\cite{DBLP:journals/tcs/Marx06}, \textsc{Directed Feedback Vertex Set}~\cite{DBLP:journals/jacm/ChenLLOR08}, or \textsc{Almost $2$-CNF SAT}~\cite{DBLP:journals/jcss/RazgonO09}.
Further work, concerning mostly undirected graphs, resulted in a wide range of involved algorithmic
techniques: applications of matroid techniques~\cite{ms2,ms1},
shadow removal~\cite{dir-mwc,DBLP:journals/siamcomp/MarxR14}, randomized
contractions~\cite{rand-contr}, LP-guided branching~\cite{mwc-a-lp,sylvain,DBLP:conf/icalp/Iwata17,magnus},
and treewidth reduction~\cite{tw-red}, among others.

From the above techniques, only the notion of important separators and the related
technique of shadow removal generalizes to directed graphs,
          giving FPT algorithms for \textsc{Directed Feedback Arc Set}~\cite{DBLP:journals/jacm/ChenLLOR08},
          \textsc{Directed Multiway Cut}~\cite{dir-mwc},
          and~\textsc{Directed Subset Feedback Vertex Set}~\cite{dsfvs}.
As a result, the parameterized complexity of a number of important graph separation
problems in directed graphs remains open, and the quest to investigate them has been
put on by the third author in a survey from 2012~\cite{marx-survey}.
Since the publication of this survey, two negative answers have been obtained.
Two authors of this work showed that \textsc{Directed Multicut} is W[1]-hard even
for four terminal pairs (leaving the case of three terminal pairs open)~\cite{DBLP:conf/soda/PilipczukW16},
while Lokshtanov et al.~\cite{LokshtanovRSZ17} showed intractability of
\textsc{Directed Odd Cycle Transversal}.

During an open problem session at Recent Advancements in Parameterized Complexity school
(December 2017)~\cite{RAPC}, Saurabh posed the question of parameterized complexity of
a weighted variant of \textsc{Directed Feedback Arc Set}, where
given a directed edge-weighted graph $G$, an integer $k$, and a target weight $w$, the goal is to find
a set $X \subseteq E(G)$ such that $G-X$ is acyclic and $X$ is of cardinality at most $k$ and weight at most $w$.
Consider a similar problem \textsc{Weighted $st$-cut}:
given a directed graph $G$ with positive edge weights and two distinguished vertices $s,t \in V(G)$, an integer $k$, and a target weight $w$,
decide if $G$ admits an $st$-cut of cardinality at most $k$ and weight at most $w$.
The parameterized complexity of this problem parameterized by $k$ is open even if $G$ is restricted to be acyclic,
    while with this restriction the problem can easily be reduced to \textsc{Directed Feedback Arc Set}
    (add an arc $(t,s)$ of prohibitively large weight).

The \textsc{Weighted $st$-cut} problem becomes similar to another directed graph cut problem, identified in~\cite{DBLP:journals/algorithmica/ChitnisEM17},
namely \textsc{Chain $\ell$-SAT}.
While this problem is originally formulated in CSP language, the graph formulation is as follows:
given a directed graph $G$ with a partition of edge set
$E(G) = P_1 \uplus P_2 \uplus \ldots \uplus P_m$ such that each $P_i$ is an edge set
of a simple path of length at most $\ell$ (the input paths could have common nodes), an integer $k$, and two vertices $s,t \in V(G)$,
find an $st$-cut $C \subseteq E(G)$ such that $|\{i | C \cap P_i \neq \emptyset\}| \leq k$.
This problem can easily be seen to be equivalent to minimum $st$-cut problem (and
    thus polynomial-time solvable) for $\ell \leq 2$, but is NP-hard for $\ell \geq 3$
and its parameterized complexity (with $k$ as a parameter) remains an open problem.

In this paper we make progress towards resolving the question of parameterized complexity
of the two aforementioned problems: weighted $st$-cut problem (in general digraphs, not necessary acyclic ones)
and \textsc{Chain $\ell$-SAT}.
Our contribution is twofold.

\paragraph*{Multi-budgeted variant}
We define a \emph{multi-budgeted} variant of a number of cut problems
(including the minimum cut problem) and show its fixed-parameter tractability.
In this variant, the edges of the graph are colored with $\ell$ colors, and the input
specifies separate budgets for each color.
More formally, we primarily consider the following problem.

\begin{quote}
\textsc{Multi-budgeted cut}\\
\textbf{Input:} A directed graph $G$, two disjoint sets of vertices $X,Y\subseteq V(G)$, an integer $\ell$, and for every $i \in \{1,2,\ldots,\ell\}$ a set $E_i \subseteq E(G)$ and an integer
$k_i \geq 1$.\\
\textbf{Question:} Is there a set of arcs $C\subseteq \bigcup_{i=1}^{\ell}E_{i}$ such that there is no directed $X-Y$ path in $G\setminus C$ and for every $i\in [\ell]$, $|C\cap E_i|\leq k_{i}$.\\
\end{quote}

Similarly we define multi-budgeted variants of \textsc{Directed Feedback Arc Set} and
\textsc{Skew Multicut} (see Section~\ref{sec:imp} for formal definitions).

We observe that \textsc{Multi-budgeted cut} for $\ell=2$ reduces to \textsc{Weighted $st$-cut} as follows.
Let $(G,X,Y,E_1,E_2,k_1,k_2)$ be a \textsc{Multi-budgeted cut} instance for $\ell=2$.
First, observe that we may assume that $E_1 \cap E_2 = \emptyset$, as we can replace every edge $e \in E_1 \cap E_2$ with two copies $e_1 \in E_1 \setminus E_2$
and $e_2 \in E_2 \setminus E_1$.
Second, construct
an equivalent \textsc{Weighted $st$-cut} instance $(G',s,t,k,w)$ as follows. To construct $G'$, first add two vertices $s,t$ to $G$
and edges $\{(s,x) | x \in X\}$ and $\{(y, t) | y \in Y\}$ of prohibitively large weight. Assign also prohibitively large weight
to every edge $e \in E(G) \setminus (E_1 \cup E_2)$. Assign weight $(k_1+1)k_2+1$ to every edge $e \in E_1$.
For every edge $e \in E_2$, add $k_1+1$ copies of $e$ to $G'$ of weight $1$ each.
Finally, set $k := (k_1+1) \cdot k_2 + k_1$ as the cardinality bound and $w := k_1((k_1+1)k_2+1) + (k_1+1)k_2$ as the target weight.
The equivalence of the instances follows from the fact that the cardinality bound allows to pick in the solution at most $k_2$ bundles of $k_1+1$ copies
of an edge of $E_2$, while the weight bound allows to pick only $k_1$ edges of $E_1$.

Thus, \textsc{Multi-budgeted cut} for $\ell=2$ corresponds to the case of \textsc{Weighted $st$-cut}
where the weights are integral and both target cardinality and weight are bounded in parameter.%
\footnote{For a reduction in the other direction, replace every arc $e$ of weight $\omega(e)$ with one copy of color $1$ and $\omega(e)$ copies of color $2$, and set
  budgets $k_1 = k$ and $k_2 = w$.}
  This connection was our primary motivation to study the multi-budgeted variants of the cut problems.

Contrary to the classic minimum cut problem, in Section~\ref{sec:np} we note that \textsc{Multi-budgeted Cut} becomes
NP-hard for $\ell \geq 2$.%
\footnote{We believe this problem must have been formulated already before and proven to be NP-hard.
  However, we were not able to find it in the literature, and thus we provide our own simple NP-hardness reduction for completeness.}
We show that \textsc{Multi-budgeted Cut} is FPT when parameterized by $k=k_{1}+...+k_{\ell}$.
For this problem, our branching strategy is as follows.
First, note that in the problem definition we assume that each $k_i$ is positive, and thus $\ell \leq k$.
A standard application of the Ford-Fulkerson algorithm gives a minimum $XY$-cut $C$ of size
$\lambda$ and $\lambda$ edge-disjoint $X-Y$ paths $P_1,P_2,\ldots,P_\lambda$.
If $C$ is a solution, then we are done.
Similarly, if $\lambda > k$, then there is no solution.
Otherwise, we branch which colors of the sought solution
should appear on each paths $P_j$; that is, for every $i \in [\ell]$ and $j \in [\lambda]$,
we guess if $P_j \cap E_i$ contains an edge of the sought solution, and in each guess
assign infinite capacities to the edges of wrong color.
If this change increased the size of a maximum flow from $X$ to $Y$, then we can charge the
branching step to this increase, as the size of the flow cannot exceed $k$.
The critical insight is that if the size of the maximum flow does not increase
(i.e., $P_1,\ldots,P_\lambda$ remains a maximum flow), then a corresponding minimum cut is necessarily
a solution. As a result, we obtain the following.

\begin{theorem}\label{thm:cut}
\textsc{Multi-budgeted Cut} admits an FPT algorithm with running time bound
$\Oh(2^{k^2 \ell} \cdot k \cdot (|V(G)|+|E(G)|))$ where $k = \sum_{i=1}^\ell k_i$.
\end{theorem}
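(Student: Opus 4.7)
The algorithm is a recursive branching driven by the value of a maximum $X$-$Y$ flow. At the root of the recursion we compute a maximum $X$-$Y$ flow by Ford--Fulkerson, obtaining $\lambda$ edge-disjoint paths $P_1,\ldots,P_\lambda$ together with a minimum cut $C$ of size $\lambda$. If $\lambda>k$ we answer \textsc{no}; if $C$ already satisfies $|C\cap E_i|\le k_i$ for every $i$ we output it. Otherwise, we branch on the \emph{color signature} of the sought solution on the flow paths: for every pair $(i,j)\in[\ell]\times[\lambda]$ we guess a bit $b_{i,j}\in\{0,1\}$ predicting whether the target solution $C^{\ast}$ contains an edge of $P_j\cap E_i$, and we immediately discard every tuple in which $|\{j:b_{i,j}=1\}|>k_i$ for some color $i$.

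For each surviving guess, we modify the graph by raising to $+\infty$ the capacity of every edge $e\in P_j\cap E_i$ with $b_{i,j}=0$ (i.e., the edges that this branch forbids cutting), and recompute the maximum flow value $\lambda'$. If $\lambda'>\lambda$ we recurse on the modified instance, while if $\lambda'=\lambda$ we output the minimum cut of the modified graph as a solution.

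The delicate point is the correctness argument at a ``$\lambda'=\lambda$'' leaf. For soundness, since $P_1,\ldots,P_\lambda$ remain a maximum flow in the modified graph, any minimum cut $C'$ of size $\lambda$ must pick exactly one edge of each $P_j$, and that edge must have finite capacity. By construction an edge $e\in P_j$ has finite capacity precisely when $e\in E_i$ implies $b_{i,j}=1$, so $|C'\cap E_i|\le|\{j:b_{i,j}=1\}|\le k_i$ for every $i$, making $C'$ a valid solution. For completeness, assume a feasible $C^{\ast}$ exists and consider the unique branch with $b_{i,j}=1$ iff $C^{\ast}\cap P_j\cap E_i\neq\emptyset$: this tuple passes the budget filter, none of the edges whose capacity we raise belongs to $C^{\ast}$, hence $C^{\ast}$ stays an $X$-$Y$ cut of the modified graph; consequently either $\lambda'=\lambda$ and the returned cut is feasible, or $\lambda'>\lambda$ and we recurse on a strictly ``richer'' instance in the sense of having a larger flow value.

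For the running time, the recursion depth is bounded by $k$ since $\lambda$ strictly increases between consecutive recursive calls and never exceeds $k$. At every node the branching factor is at most $2^{\ell\lambda}\le 2^{\ell k}$, giving at most $2^{\ell k^2}$ leaves. Maintaining the flow incrementally along each root-to-leaf path and computing the cut costs $\Oh(k\cdot(|V(G)|+|E(G)|))$ in total, which yields the claimed bound of $\Oh(2^{k^2\ell}\cdot k\cdot(|V(G)|+|E(G)|))$.
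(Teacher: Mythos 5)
Your proposal is correct and follows essentially the same approach as the paper: flow-guided branching over the color signature on the $\lambda$ flow paths, with progress measured by the strict increase of the maximum flow value (bounded by $k$), and the key observation that when the flow value does not increase after forbidding the wrongly-colored edges, the resulting minimum cut hits each path exactly once in a permitted edge and is therefore automatically budget-respecting. The paper phrases the capacity-raising step as shrinking a set $Z$ of deletable edges and places the flow-recheck in the child call (its Claim~6), but these are cosmetic differences; your correctness and running-time analyses match the paper's.
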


The charging of the branching step to a flow increase appears also in the classic
argument for bound of the number of important separators~\cite{DBLP:journals/jacm/ChenLLOR08} (see also~\cite[Chapter 8]{DBLP:books/sp/CyganFKLMPPS15}).
We observe that our branching algorithm can be merged with this procedure, yielding
a bound (as a function of $k$) and enumeration procedure
of naturally defined multi-budgeted important separators.
This in turn allows us to generalize our FPT algorithm
to \textsc{Multi-budgeted Skew Multicut} and \textsc{Multi-budgeted Directed Feedback Arc Set}.

\begin{theorem}\label{thm:dfas}
\textsc{Multi-budgeted Skew Multicut} and \textsc{Multi-budgeted Directed Feedback Arc Set} admit FPT algorithms
with running time bound $2^{\Oh(k^3 \log k)} (|V(G)|+|E(G)|)$ where $k = \sum_{i=1}^\ell k_i$.
\end{theorem}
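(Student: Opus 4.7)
The plan is to lift the branching strategy of Theorem~\ref{thm:cut} to a notion of \emph{multi-budgeted important separators} and then plug their enumeration into the standard algorithms that handle \textsc{Skew Multicut} and \textsc{Directed Feedback Arc Set} using important separators~\cite{DBLP:journals/jacm/ChenLLOR08}. I will call a solution $C$ to a \textsc{Multi-budgeted Cut} instance \emph{important} if the reachable set $R_{G-C}(X)$ is inclusion-maximal among all solutions respecting the budget vector $(k_1,\dots,k_\ell)$, and give an FPT enumeration of them by interleaving the procedure of Theorem~\ref{thm:cut} with the classical important-separator enumeration. Concretely, at each recursive step I execute one round of Theorem~\ref{thm:cut}'s branching -- guessing, for each path of a maximum $X$--$Y$ flow and each color, whether the solution uses an edge of that color on that path, then re-solving max-flow -- after which, in every sub-instance, either the current minimum cut $C_0$ is already a candidate, or I branch once more over which arcs of $C_0$ truly lie in the important separator and contract the tails of the remaining arcs into the $X$-side before recursing. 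Since along every branch both the max-flow value and the residual budgets move monotonically toward their bound of $k$, the enumeration tree will have size $2^{\Oh(k^2 \log k)}$.

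Next, I will solve \textsc{Multi-budgeted Skew Multicut} by the usual recursion on the terminal pairs: for the pair $(s_q, t_q)$ enumerate all multi-budgeted important $(\{s_q\}, \{t_1,\dots,t_q\})$-separators, delete the chosen separator, decrement the budgets by its color profile, and recurse on the remaining pairs. The classical exchange argument lifts to the multi-budgeted setting because replacing a cut by an important one that is pushed towards $Y$ never increases per-color usage. With at most $k$ levels of recursion and $2^{\Oh(k^2 \log k)}$ branches per level, this yields the desired $2^{\Oh(k^3 \log k)}$ bound for \textsc{Multi-budgeted Skew Multicut}. Finally, I reduce \textsc{Multi-budgeted Directed Feedback Arc Set} to \textsc{Multi-budgeted Skew Multicut} via iterative compression: starting from a DFAS of size $k+1$, guess the $2^{\Oh(k)}$ subset of arcs to keep and one of the $\Oh(k!)$ topological orders of their endpoints in the remainder, obtaining a skew multicut instance on the same colored edges with the same budgets, which only multiplies the runtime by $2^{\Oh(k \log k)}$.

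The technical core -- and the step I expect to require the most care -- is proving correctness of the enumeration in Step~1: for every budget-respecting solution $C$ some enumerated important separator $C^\star$ must dominate it, in the sense that $R_{G-C}(X) \subseteq R_{G-C^\star}(X)$ and $|C^\star \cap E_i| \leq |C \cap E_i|$ for every color~$i$. The single-budget proof relies on a short augmenting-path exchange, but in the multi-budgeted setting an exchange along a residual path may trade an edge of one color for an edge of another and thereby break some budget. The path-coloring guesses from Theorem~\ref{thm:cut} are precisely what enables an exchange argument that respects every color class, and the bulk of the work will be to carry this invariant through the recursive pushes, so that every commitment made at a higher level of the enumeration tree remains compatible with the local exchange performed below.
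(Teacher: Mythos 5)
Your overall architecture is the paper's: define multi-budgeted important separators, enumerate $2^{\Oh(k^2\log k)}$ of them by merging the flow-guided colour branching with the classical important-separator enumeration, combine a pushing lemma with a depth-$k$ recursion for \textsc{Skew Multicut}, and reduce \textsc{Directed Feedback Arc Set} to it by guessing a topological order of a small feedback set. (The paper computes a directed feedback \emph{vertex} set $W$ with $|W|\le k$ once and guesses its permutation, rather than running iterative compression; your compression loop would also cost an extra factor of $|E(G)|$ over the claimed running time, but that is cosmetic.) However, your definition of ``important'' --- $R_{G-C}(X)$ inclusion-maximal among budget-respecting solutions --- is not the notion the argument needs. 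The pushing lemma replaces a piece $D\subseteq C$ of a skew-multicut solution by an important separator $D^*$ and keeps $(C\setminus D)\cup D^*$; this remains budget-respecting only if $|D^*\cap E_i|\le|D\cap E_i|$ for \emph{every} colour $i$. Importance must therefore be maximality under the bicriteria domination order (closer to $t$ \emph{and} per-colour counts no larger), which is what you in fact demand of the enumeration in your final paragraph --- but with plain reachability-maximality, as you define it, the pushing lemma is simply false.

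The more substantial issue is that the correctness of the enumeration --- for every minimal budget-respecting cut $C_0$, some enumerated cut dominates it in the above bicriteria sense --- is precisely the step you defer (``the bulk of the work will be to carry this invariant through''), and it is the only genuinely new content relative to the single-budget case. The paper's mechanism is quite specific: the colour guesses $A_i$ record on which flow paths $C_0$ uses a \emph{bottleneck} edge of colour $i$ (an edge of the set $B$ of edges lying on some minimum $Z$-respecting cut), the candidate emitted at a leaf is the closest-to-$Y$ minimum $\widehat Z$-respecting cut $\widehat C$, and domination of $C_0$ by a minimal cut $D\subseteq\delta^+(R_0\cup\widehat R)$ is proved by charging each colour-$i$ edge $e_j\in D\setminus C_0$ (necessarily the last $\widehat B$-edge on its path $P_j$) to a distinct colour-$i$ edge of $C_0\setminus D$ lying strictly earlier on the same $P_j$, whose existence is exactly what the guess $j\in A_i$ guarantees; a separate branch type shrinks the sets $A_i$ when the guess over-counts, and the recursion's measure is the flow value in one branch type and $\sum_i|A_i|$ in the other --- not ``residual budgets.'' None of this is recoverable from ``contract the tails of the remaining arcs into the $X$-side,'' which is the single-budget push and, as you yourself observe, can trade an edge of one colour for an edge of another. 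So the skeleton is right, but the theorem's technical core is missing rather than proved.
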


The arguments for \textsc{Multi-budgeted Cut} are presented in Section~\ref{sec:cut}.
The generalization for important separators is contained in Section~\ref{sec:imp}.

\paragraph*{Bound on the number of pushed solutions}
While we are not able to establish fixed-parameter tractability of
the weighted variant of the minimum cut problem (even in acyclic graphs) nor of \textsc{Chain $\ell$-SAT},
we show the following graph-theoretic statement.
Consider a directed graph $G$ with two distinguished vertices $s,t \in V(G)$.
For two (inclusion-wise) minimal $st$-cuts $C_1$, $C_2$ we say
that $C_1$ \emph{is closer to $t$} than $C_2$ if every vertex
reachable from $s$ in $G-C_2$ is also reachable from $s$ in $G-C_1$.
A classic submodularity argument implies that there is exactly one closest
to $t$ minimum $st$-cut, while the essence of the notion of important separators
is the observation that there is bounded-in-$k$ number of minimal separators of
cardinality at most $k$ that are closest to $t$.
In Section~\ref{sec:bound} we show a similar existential statement for the two discussed problems.

\begin{theorem}\label{thm:bnd-weights}
For every integer $k$ there exists an integer $g$ such that the following holds.
Let $G$ be a directed graph with positive edge weights and two distinguished vertices $s,t \in V(G)$.
Let $\mathcal{F}$ be a family of all $st$-cuts that are of minimum weight
among all (inclusion-wise) minimal $st$-cuts of cardinality at most $k$.
Let $\mathcal{G} \subseteq \mathcal{F}$ be the family of those cuts $C$ such that no other
cut of $\mathcal{F}$ is closer to $t$.
Then $|\mathcal{G}| \leq g$.
\end{theorem}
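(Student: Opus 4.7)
The plan is to reduce Theorem~\ref{thm:bnd-weights} to the bound on multi-budgeted important separators established in Section~\ref{sec:imp}. The key observation is that every cut $C \in \mathcal{G}$ uses at most $k$ edges whose positive weights sum to $w^\ast := \min\{w(C') : C' \text{ is an inclusion-minimal } st\text{-cut with } |C'| \le k\}$, so $C$ uses at most $k$ distinct edge-weight values. I therefore classify each $C \in \mathcal{G}$ by its \emph{weight profile}: the multiset of weights of its edges, equivalently a function $a\colon \mathbb{R}_{>0} \to \mathbb{Z}_{\ge 0}$ with support of size at most $k$, $\sum_w a(w) \le k$, $\sum_w w\cdot a(w) = w^\ast$, and support contained in the set of edge weights of $G$.

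For a fixed weight profile $\pi$, I would colour each edge of $G$ by its weight, set the budget of colour $w$ to $\pi(w)$ for each $w$ in the support of $\pi$, and forbid edges of weights outside the support. The cuts in $\mathcal{G}$ realising $\pi$ are exactly those multi-budgeted cuts, and Pareto-maximality in $\mathcal{F}$ implies Pareto-maximality in the multi-budgeted instance by a routine lifting argument. Hence, by the bound on multi-budgeted important separators from Section~\ref{sec:imp}, the number of cuts in $\mathcal{G}$ that realise $\pi$ is at most some function $f(k)$ depending only on $k$.

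The outstanding obstacle, and the main creative step of the proof, is bounding the number of distinct weight profiles realised across $\mathcal{G}$ by a function of $k$ only, despite edge weights being arbitrary positive reals. I expect the argument here to be an uncrossing/exchange argument: for $C_1, C_2 \in \mathcal{G}$ with close-but-distinct profiles, applying the submodularity of cardinality and of weight to the reach sets $R(C_1), R(C_2)$ and combining with the Pareto-maximality and weight-minimality conditions should force structural collisions, so that only boundedly many profiles (as a function of $k$) are simultaneously realisable. Coupling this profile-count bound with the per-profile bound $f(k)$ yields the required $g(k)$. The bulk of the technical work will reside in this profile-count step, which may require an additional induction on $k$ or on the number of distinct edge-weights occurring in $\mathcal{G}$.
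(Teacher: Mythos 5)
Your proposal is incomplete: the entire difficulty of the theorem has been deferred to the unproven ``profile-count'' step, and the argument you sketch for it does not work. The first half of your plan is sound and is a genuinely different observation from the paper's: fixing a weight profile $\pi$, colouring edges by weight and setting budgets $k_w = \pi(w)$, every $C \in \mathcal{G}$ realising $\pi$ is indeed an important budget-respecting cut (any dominating cut $D$ would have $|D| \le k$ and $w(D) \le w^\ast$, hence $D \in \mathcal{F}$ and closer to $t$, contradicting $C \in \mathcal{G}$), so Theorem~\ref{thm:imp-enum} bounds each profile class by $2^{\Oh(k^2\log k)}$. But the remaining step --- bounding the number of profiles realised across $\mathcal{G}$ by a function of $k$ --- is not a routine uncrossing. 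The standard submodular exchange applied to reach sets $R(C_1), R(C_2)$ produces the cuts $\delta^+(R_1\cup R_2)$ and $\delta^+(R_1\cap R_2)$, and while submodularity controls their total cardinality and total weight, neither cut individually need have cardinality at most $k$; hence neither need lie in $\mathcal{F}$, and no contradiction with Pareto-maximality follows. This failure is not incidental: if plain uncrossing were available one would conclude $|\mathcal{G}|=1$ exactly as for unconstrained minimum cuts, and the theorem would be trivial. Moreover, since each profile is realised by at least one member of $\mathcal{G}$, bounding the number of profiles is essentially equivalent to the theorem you are trying to prove, so you have reduced the problem to itself.

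The paper takes a different route that circumvents this. It first applies the Sunflower Lemma to the cuts themselves (as edge sets of size at most $k$), extracting a large subfamily with common core $c$; deleting $c$ yields a large family of \emph{pairwise disjoint} minimal cuts that still satisfy the minimum-weight and closest-to-$t$ conditions in $G-c$. A Ramsey-type argument on an abstract ``$k$-maze'' structure (Theorem~\ref{thm:maze}), combined with the anti-isolation lemma to kill large ``flowers'', then forces a large \emph{bowtie} configuration. Only inside the bowtie does an exchange argument appear (Lemma~\ref{lem:bowtie-weighted}): a pigeonhole on the pairs $(|A_i|,|B_i|)$ guarantees that the swapped sets $A_i\cup B_j$ and $A_j\cup B_i$ still have cardinality exactly $k$, which is precisely what makes the weight exchange legitimate and yields the contradiction with membership in $\mathcal{G}$. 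If you want to salvage your approach, you would need to supply a mechanism of comparable strength (sunflower plus Ramsey, or similar) to control the profiles; the submodularity argument alone will not do it.
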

\begin{theorem}\label{thm:bnd-chain}
For every integers $k,\ell$ there exists an integer $g'$ such that the following holds.
Let $I := (G,s,t,(P_i)_{i=1}^m,k)$ be a \textsc{Chain $\ell$-SAT} instance that is a yes-instance
but $(G,s,t,(P_i)_{i=1}^m,k-1)$ is a no-instance.
Let $\mathcal{F}$ be a family of all (inclusion-wise) minimal solutions to $I$
and let $\mathcal{G} \subseteq \mathcal{F}$ be the family of those cuts $C$ such that no other
cut of $\mathcal{F}$ is closer to $t$.
Then $|\mathcal{G}| \leq g'$.
\end{theorem}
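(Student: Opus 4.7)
\emph{Rigidity step.} For each $C \in \mathcal{G}$, define $J(C) := \{i \in [m] : C \cap P_i \neq \emptyset\}$; since the minimum chain-cost is exactly $k$, $|J(C)| = k$. We first show that $C$ is uniquely determined by $J(C)$. Suppose $C_1, C_2 \in \mathcal{G}$ share $J(C_1) = J(C_2) =: J$, and let $R_j$ denote the source side of $C_j$, so $C_j = \delta^+(R_j)$. One checks that $\delta^+(R_1 \cup R_2) \subseteq C_1 \cup C_2 \subseteq \bigcup_{i \in J} P_i$, so this cut has chain-cost at most $|J| = k$; any inclusion-wise minimal subcut $C''$ of it therefore satisfies $|J(C'')| = k$ (the minimum chain-cost is $k$), hence $C'' \in \mathcal{F}$. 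By the pushed property of $C_1, C_2$, neither $R_1$ nor $R_2$ can strictly contain the other, so if $R_1 \neq R_2$ then $R_1 \cup R_2 \supsetneq R_j$ for both $j$. Combined with $R(C'') \supseteq R_1 \cup R_2$ (which follows from $C'' \subseteq \delta^+(R_1 \cup R_2)$ together with the fact that every $v \in R_j$ is reachable from $s$ in $G \setminus C_j$ via a path contained in $R_j$), this makes $C''$ strictly closer to $t$ than both $C_j$, contradicting $C_j \in \mathcal{G}$. Hence $R_1 = R_2$, so $C_1 = C_2$, and $C \mapsto J(C)$ is injective on $\mathcal{G}$.

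\emph{Counting step.} It remains to bound the number of distinct hit-sets $J(C)$. Every $C \in \mathcal{F}$ has $|C| \leq k\ell$, so the classical maximum $st$-flow $\lambda$ of $G$ satisfies $\lambda \leq k\ell$. The plan is to adapt the flow-guided branching of Theorem~\ref{thm:cut}, as extended to multi-budgeted important separators in Section~\ref{sec:imp}. Fix a maximum $st$-flow $Q_1, \ldots, Q_\lambda$; every $C \in \mathcal{G}$ contains at least one edge of each $Q_j$, lying in some $P_{i_j}$ with $i_j \in J(C)$. Branch over the possible tuples $(i_j)_{j \in [\lambda]}$ (which collectively take at most $k$ distinct values in $[m]$) and, for each committed partition index $i$, over the subset $C \cap P_i$ (at most $2^\ell$ options). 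At each branching step either a fresh path index is committed to $J(C)$ --- reducing the remaining chain-budget by one --- or previously committed paths are raised to infinite capacity, strictly increasing the maximum flow in the residual instance, as in the proof of Theorem~\ref{thm:cut}. Induction on the lexicographic measure $(k,\, 2k\ell - \lambda)$ then bounds the number of branching leaves, and with it $|\mathcal{G}|$, by a function $g'(k, \ell)$.

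\emph{Main obstacle.} The principal difficulty is transferring the flow-guided branching from the multi-budgeted cut setting --- where the budget counts edges of each disjoint colour class --- to the chain-constraint setting, where the budget counts the number of distinct paths hit by the cut. One must verify that (a) committing to a new hit path really decreases the chain-budget in the residual instance, (b) forbidding a path from being used strictly increases the maximum $st$-flow, unless the instance has already collapsed to a unique closest-to-$t$ cut, and (c) the pushed property of $C$ is preserved under these modifications (which relies on the observation that if a cut $C'$ closer to $t$ than $C$ appears in the modified instance's $\mathcal{F}$, then $C'$ is also in the original $\mathcal{F}$, contradicting $C \in \mathcal{G}$). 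These verifications run parallel to the technical core of the extension from Theorem~\ref{thm:cut} to the multi-budgeted important separator enumeration in Section~\ref{sec:imp}, and should transfer with only bookkeeping adjustments.
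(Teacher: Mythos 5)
Your rigidity step is correct and is a nice observation not present in the paper: since the chain-cost of every solution is exactly $k$ and $\mathcal{G}$ is an antichain under ``closer to $t$'', the submodularity-style argument with $\delta^+(R_1\cup R_2)$ does show that $C\mapsto J(C)$ is injective on $\mathcal{G}$. But this only rephrases the theorem --- bounding $|\mathcal{G}|$ is now exactly the problem of bounding the number of distinct hit-sets --- and the counting step, where all the content lies, has a genuine gap.

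The flow-guided branching of Theorem~\ref{thm:cut} and Section~\ref{sec:imp} is bounded precisely because the number of colours satisfies $\ell\le k$: guessing, for each flow path, which colours of the solution appear on it is a branching of degree at most $2^{k\ell}$. In the chain setting the role of a ``colour'' is played by a partition class $P_i$, and there are $m$ of these, with no bound in the parameter. A single flow path $Q_j$ can meet an unbounded number of distinct classes $P_i$, so your branching ``over the possible tuples $(i_j)_{j\in[\lambda]}$'' has unbounded degree; the remark that the tuple takes at most $k$ distinct values does not help, since each coordinate still ranges over an unbounded set. Your items (a)--(c) are exactly the verifications that cannot be carried out, and this is not a bookkeeping issue: an argument of this kind would yield an enumeration procedure for the closest-to-$t$ solutions, which the paper explicitly states it is unable to obtain, and would be a major step towards the open parameterized complexity of \textsc{Chain $\ell$-SAT} itself. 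The paper instead proves the bound purely existentially: it applies the Sunflower Lemma to the hit-sets $J(u)$ to extract a large subfamily whose cuts become pairwise disjoint (and hit pairwise disjoint sets of paths) after deleting a common core $c$, and then an iterated Ramsey argument (Theorem~\ref{thm:bowtie-cuts}, via Theorem~\ref{thm:maze}) shows that any large family of disjoint minimal cuts contains either a large ``flower'' (impossible by the anti-isolation lemma, Lemma~\ref{anti-isolation}) or a $4$-bowtie (impossible for closest-to-$t$ chain solutions by Lemma~\ref{lem:bowtie-chain}). You would need to replace your counting step by an argument of this existential type.
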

Unfortunately, our proof is purely existential, and does not yield an enumeration procedure of the ``closest to $t$'' solutions.


\section{Preliminaries}
For an integer $n$, we denote $[n] = \{1,2,\ldots,n\}$.
For a directed graph $G$, we use $V(G)$ to represent the set of vertices of $G$ and $E(G)$ to represent the set of directed edges of $G$.
In all multi-budgeted problems, the directed graph $G$ comes with sets $E_i \subseteq E(G)$ for $i\in[\ell]$ which we refer as \emph{colors}.
That is, an edge $e$ is \emph{of color $i$} if $e \in E_i$, and of \emph{no color} if $e \in E(G) \setminus \bigcup_{i=1}^\ell E_i$.
Note that an edge may have many colors, as we do not insist on the sets $E_i$ being pairwise disjoint.

Let $X$ and $Y$ be two disjoint vertex sets in a directed graph $G$, an $XY$-\emph{cut} of $G$ is a set of edges $C$ such that every directed path from a vertex in $X$ to a vertex in $Y$ contains an edge of $C$.
A cut $C$ is \emph{minimal} if no proper subset of $C$ is an $XY$-cut, and \emph{minimum} if $C$ is of minimum possible cardinality.
Let $C$ be an $XY$-cut and let $R$ be the set of vertices reachable from $X$ in $G\setminus C$.
We define $\delta^{+}(R)=\{(u,v)\in E(G)|u\in R$ and $v\notin R\}$ and note that if $C$ is minimal, then $\delta^+(R) = C$.

Let $(G,X,Y,\ell,(E_i,k_i)_{i=1}^\ell)$ be a \textsc{Multi-budgeted cut} instance and let $C$ be an $XY$-cut.
We say that $C$ is \emph{budget-respecting} if $C \subseteq \bigcup_{i=1}^\ell E_i$ and $|C \cap E_i| \leq k_i$ for every $i \in [\ell]$.
For a set $Z \subseteq E(G)$ we say that $C$ is \emph{$Z$-respecting} if $C \subseteq Z$.
In such contexts, we often call $Z$ the \emph{set of deletable edges}.
An $XY$-cut $C$ is a \emph{minimum $Z$-respecting cut} if it is a $Z$-respecting $XY$-cut of minimum possible cardinality
among all $Z$-respecting $XY$-cuts.

Our FPT algorithms start with $Z = \bigcup_{i=1}^\ell E_i$ and in branching steps shrink the set $Z$ to reduce the search space.
We encapsulate our use of the classic Ford-Fulkerson algorithm in the following statement.
\begin{theorem}\label{thm:ff}
Given a directed graph $G$, two disjoint sets $X,Y \subseteq V(G)$, a set $Z \subseteq E(G)$, and an integer $k$,
      one can in $\Oh(k(|V(G)|+|E(G)|))$ time either find the following objects:
      \begin{itemize}
      \item $\lambda$ paths $P_1,P_2,\ldots,P_\lambda$ such that every $P_i$ starts in $X$ and ends in $Y$,
     and every edge $e \in Z$ appears on at most one path $P_i$;
      \item a set $B \subseteq Z$ consisting of all edges of $G$ that participate in some minimum $Z$-respecting $XY$-cut;
      \item a minimum $Z$-respecting $XY$-cut $C$ of size $\lambda$ that is closest to $Y$ among all minimum $Z$-respecting $XY$-cuts;
      \end{itemize}
      or correctly conclude that there is no $Z$-respecting $XY$-cut of cardinality at most $k$.
\end{theorem}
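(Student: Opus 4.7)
\bigskip

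\noindent\textbf{Proof plan for Theorem~\ref{thm:ff}.}
The plan is to reduce to a single invocation of a bounded number of augmenting-path iterations of Ford--Fulkerson, followed by standard residual-graph post-processing. Introduce a super-source $s^\ast$ and a super-sink $t^\ast$; add edges $(s^\ast,x)$ for $x\in X$ and $(y,t^\ast)$ for $y\in Y$. Assign capacity $1$ to every edge of $Z$ and capacity $k+1$ to every other edge (including the new ones). In this network, a $Z$-respecting $XY$-cut of cardinality at most $k$ corresponds exactly to an $s^\ast t^\ast$-cut of capacity at most $k$: any min cut of capacity $\le k$ cannot contain an edge of capacity $k+1$, so it is a subset of $Z$, and the cardinality of the cut equals its capacity. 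Run Ford--Fulkerson with BFS augmenting paths, aborting as soon as $k+1$ units of flow are routed. If this happens, we output "no solution"; otherwise we obtain an integral maximum flow $f$ of some value $\lambda \le k$, together with the residual graph $G_f$, in $\Oh(k(|V(G)|+|E(G)|))$ time.

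Next I would extract the three objects. For the $\lambda$ paths $P_1,\ldots,P_\lambda$, apply the standard integral flow decomposition of $f$: since $f$ is integral and acyclic (Ford--Fulkerson can always be made acyclic by cancelling flow on the reverse residual edges, or simply by decomposing and noting each decomposition path is simple), we get $\lambda$ simple $s^\ast t^\ast$-paths whose $X$-to-$Y$ prefixes are the desired $P_i$; since every edge of $Z$ has capacity $1$, it carries at most one unit of flow and hence belongs to at most one $P_i$. For the closest-to-$Y$ minimum $Z$-respecting cut $C$, let $R$ be the set of vertices from which $t^\ast$ is reachable in $G_f$, and take $C := \delta^+(V\setminus R) \cap E(G)$; this is the classical "closest to sink" construction, which yields a minimum $s^\ast t^\ast$-cut, and by the capacity assignment $C \subseteq Z$, while the minimality of $C$ among minimum $Z$-respecting cuts and its "closest to $Y$" property follow from the fact that any vertex reachable to $t^\ast$ in $G_f$ must lie on the sink side of every minimum cut.

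For the set $B$ of edges participating in some minimum $Z$-respecting $XY$-cut, I would use the well-known SCC-based characterization of the arcs of $G_f$. Compute the strongly connected components of $G_f$ in $\Oh(|V(G)|+|E(G)|)$ time; let $\mathcal{S}$ denote the set of SCCs reachable from $s^\ast$'s SCC in the condensation, and $\mathcal{T}$ the set of SCCs that can reach $t^\ast$'s SCC. An arc $(u,v)\in Z$ is then in $B$ if and only if it is saturated by $f$ (equivalently, the residual reverse arc is present but the forward arc is not), and $u$ lies in some SCC of $\mathcal{S}$ while $v$ lies in a different SCC belonging to $\mathcal{T}$. This characterization is standard for unit-capacity networks: such an edge lies in the min cut defined by the source side of any interval of the SCC condensation separating its endpoints, and conversely every edge of a minimum cut satisfies these conditions. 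All three post-processing steps take $\Oh(|V(G)|+|E(G)|)$ time, so the total is $\Oh(k(|V(G)|+|E(G)|))$.

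There is no real obstacle; the only point requiring a little care is to justify that capacity-$(k+1)$ edges truly cannot interfere with any object we output (so that the algorithm returns a $Z$-respecting cut and a set $B\subseteq Z$), which follows from the cardinality bound $\lambda \le k < k+1$ on each cut we consider. The rest is a direct translation of well-known min-cut structure results to our setting with a distinguished set of deletable edges.
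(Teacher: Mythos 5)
Your overall approach is the same as the paper's: run $k+1$ rounds of Ford--Fulkerson with unit capacities on $Z$ and effectively infinite capacities elsewhere, decompose the flow into paths, take the closest-to-sink residual cut for $C$, and read off $B$ from the strongly connected components of the residual graph. The paths, the NO-answer, and the cut $C$ are all handled correctly.

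However, your characterization of $B$ is wrong as stated, and the error is in the direction of the reachability conditions. You require that $u$ lie in an SCC \emph{reachable from} $s^\ast$'s SCC and that $v$ lie in an SCC that \emph{can reach} $t^\ast$'s SCC. In the residual graph of a maximum flow these conditions generally fail for edges that do belong to minimum cuts: the set of vertices reachable from $s^\ast$ in $G_f$ is only the source side of the min cut \emph{closest to the source}, not the whole graph. Concretely, take $X=\{x\}$, $Y=\{y\}$, $V(G)=\{x,a,y\}$, $Z=E(G)=\{(x,a),(a,y)\}$. The maximum flow is $1$, both $\{(x,a)\}$ and $\{(a,y)\}$ are minimum $Z$-respecting cuts, so $B=\{(x,a),(a,y)\}$. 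In the residual graph both edges are saturated, so the only residual arcs among $x,a,y$ are the reversals $(a,x)$ and $(y,a)$; hence $a$ is not reachable from $s^\ast$'s SCC and cannot reach $t^\ast$'s SCC, and your test rejects \emph{both} edges, returning $B=\emptyset$. The correct test (the one the paper uses) is simply: $e=(u,v)$ is in $B$ iff $e$ is saturated and $u$, $v$ lie in \emph{different} SCCs of $G_f$, with no further side conditions. Necessity is clear, since the source side of any minimum cut is closed under residual out-neighbors and the reverse arc $(v,u)$ is present. For sufficiency, note that a saturated unit-capacity edge carries flow and thus lies on a flow path, whose reverse residual arcs let $u$ reach $s^\ast$ and let $t^\ast$ reach $v$ in $G_f$; taking $S$ to be the complement of the set of vertices that can reach $v$ in $G_f$ then gives a closed set containing $s^\ast$ and $u$ but neither $v$ nor $t^\ast$ (if $s^\ast$ could reach $v$, then $u$ could reach $v$ via $s^\ast$, contradicting that $u,v$ are in different SCCs), so $\delta^+(S)$ is a minimum cut containing $(u,v)$. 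The conditions you added, besides being superfluous, are the reverses of the ones that actually hold ($u$ can reach $s^\ast$, and $t^\ast$ can reach $v$), which is why your version discards valid edges.
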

\begin{proof}
Assign capacity $1$ to every edge of $Z$ and capacity $+\infty$ to every edge not in $Z$. Run $k+1$ rounds
of the Ford-Fulkerson algorithm. If the final flow exceeded $k$, return that there is no $Z$-respecting $XY$-cut of cardinality at most $k$.
Otherwise, decompose the final flow into unit flow paths $P_1,\ldots,P_\lambda$ in a standard manner.
For the set $B$, observe that $B$ consists of exactly those edges that are fully saturated in the flow network, and their
reverse counterparts in the residual network are not contained in a single strongly connected component of the residual network (and thus
    can be discovered in linear time).
Finally, observe that the sought cut $C$ consists of the last edge of $B$ on every path $P_i$.
\end{proof}

\section{Multi-budgeted cut}\label{sec:cut}

We now give an FPT algorithm parameterized by $k=\Sigma_{i=1}^{\ell}k_{i}$ for the \textsc{Multi-budgeted cut} problem.
We follow a branching strategy that recursively reduces a set $Z$ of deletable edges.
That is, we start with $Z = \bigcup_{i=1}^\ell E_i$ (so that every solution is initially $Z$-respecting) and in each recursive
step, we look for a $Z$-respecting solution and reduce the set $Z$ in a branching step.

Consider a recursive call where we look for a $Z$-respecting solution to the input \textsc{Multi-budgeted cut}
instance $(G,X,Y,\ell,(E_i,k_i)_{i=1}^\ell)$. That is, we look for a $Z$-respecting budget-respecting cut.
We apply Theorem~\ref{thm:ff} to it. If it returns that there is no $Z$-respecting $XY$-cut of size at most $k$, we terminate the current branch, as there is no solution.
Otherwise, we obtain the paths $P_1,P_2,\ldots,P_\lambda$, the set $B$ (which we will not use in this section), and the cut $C$.

If $C$ is budget-respecting, then it is a solution and we can return it.
Otherwise, we perform the following branching step.
We iterate over all tuples $(A_{1},...,A_{\ell})$ such that for every $i\in [\ell]$, $A_{i}\subseteq [\lambda]$ and $|A_{i}|\leq k_i$.
$A_{i}$ represents the subset of paths $P_{1},...,P_{\lambda}$ on which at least one edge of color $i$ is in the solution for each $i\in [\ell]$.
For those edges of color $i$ which are on the paths not indicated by $A_i$, they are not in the solution.
Thus we can safely delete them from $Z$.
More formally, for every $i \in [\ell]$ and $j \in [\lambda] \setminus A_i$, we remove from $Z$ all edges of $E(P_j) \cap E_i$.
We recurse on the reduced set $Z$.
A pseudocode is available in Figure~\ref{fig1}.

\begin{figure}[htbp]
\setbox4=\vbox{\hsize32pc \strut \begin{quote}
\vspace*{-5mm} \footnotesize
\textbf{MultiBudgetedCut}($G,X,Y,\ell,(E_i,k_i)_{i=1}^\ell$) \\
Input: A directed graph $G$, two disjoint set of vertices $X,Y\subseteq V(G)$, an integer $\ell$, for every $i \in [\ell]$ a set $E_i \subseteq E(G)$
and an integer $k$.\\
Output: an $XY$ cut $C\subseteq \bigcup_{i=1}^{\ell}E_{i}$ such that for every $i\in [\ell]$, $|C\cap E_i|\leq k_{i}$ if it exists, otherwise return NO.\\

1. $Z := \bigcup_{i=1}^\ell E_i$; \\
2. \textbf{return} \textbf{\textbf{Solve}}($Z$); \\

\textbf{Solve}($Z$) \\
a. apply Theorem~\ref{thm:ff} to $(G,X,Y,k,Z)$ where $k = \sum_{i=1}^\ell k_i$, obtaining objects
$(P_i)_{i=1}^\lambda$, $B$, and $C$, or an answer NO; \\
b. \textbf{if} the answer NO is obtained, \textbf{then} \textbf{return} NO;  \\
c. \textbf{if} $C$ is budget-respecting, \textbf{then} \textbf{return} $C$;    \\
d. \textbf{for each} $(A_{1},...,A_{\ell})$ such that for every $i$ in $[\ell]$, $A_{i}\subseteq [\lambda]$ and $|A_{i}|\leq k_i$ \textbf{do} \\
d.1 \hspace*{2mm} $\widehat{Z}:=Z$;   \\
d.2 \hspace*{2mm} \textbf{for each} $i\in [\ell]$ \textbf{do}  \\
\hspace*{11mm} \textbf{for each} $j\in [\lambda]\setminus{A_{i}}$ \textbf{do} \\
\hspace*{16mm} $\widehat{Z} := \widehat{Z} \setminus (E_i \cap E(P_j))$; \\
d.3 \hspace*{2mm} $D$ = \textbf{Solve}($\widehat{Z}$); \\
d.4 \hspace*{2mm} \textbf{if} $D\neq $NO \textbf{then} \textbf{return} $D$;\\
e. \textbf{return} NO;  \\
\end{quote}

\vspace*{-6mm}
\strut}
$$\boxit{\box4}$$
\vspace*{-9mm}
\caption{FPT algorithm for \textsc{Multi-budgeted cut}} \label{fig1}
\end{figure}

\begin{theorem}
The algorithm in Figure~\ref{fig1} for \textsc{Multi-budgeted cut} is correct and runs in time $O(2^{\ell k^{2}}\cdot k\cdot (|V(G)|+|E(G)|))$ where $k=\Sigma_{i=1}^{\ell}k_{i}$.
\end{theorem}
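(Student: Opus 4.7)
The plan is to separate correctness from the running-time analysis; the real work lies in a single key lemma saying that every branch either strictly increases the maximum $XY$-flow or terminates without further recursion.

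\paragraph*{Correctness.} I would argue by induction on the recursion depth that if a $Z$-respecting budget-respecting solution $S$ exists then \textbf{Solve}$(Z)$ returns one. The termination branches (steps b and c) are immediate from Theorem~\ref{thm:ff} and the explicit budget check. For the inductive step, consider the ``correct'' branch $A_i^\star := \{\,j \in [\lambda] : S \cap E_i \cap E(P_j) \neq \emptyset\,\}$ for each $i \in [\ell]$. Since $|A_i^\star| \leq |S \cap E_i| \leq k_i$, this tuple is enumerated in loop d, and by construction none of the edges deleted from $\widehat Z$ in that iteration belongs to $S$; hence $S$ remains a $\widehat Z$-respecting solution and the induction hypothesis supplies a cut.

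\paragraph*{Key lemma.} If the maximum $XY$-flow in the $\widehat Z$-network equals the maximum flow $\lambda$ in the parent $Z$-network, then the cut $C'$ returned by Theorem~\ref{thm:ff} in the recursive call is budget-respecting, so that call returns at step c. To prove this, I use the paths $P_1,\ldots,P_\lambda$ from the parent call: they remain a valid unit-capacity flow in the $\widehat Z$-network, since edges of $Z\setminus \widehat Z$ acquire infinite capacity and the paths are pairwise $Z$-edge-disjoint and hence also $\widehat Z$-edge-disjoint. Their total value is $\lambda$, which by assumption equals the new maximum, so they are again a max flow. Because $C' \subseteq \widehat Z$ (all other edges now have infinite capacity) and each edge of $C'$ carries exactly one unit, flow-value accounting yields a bijection $j \mapsto e_j^\star$ between $[\lambda]$ and $C'$ with $e_j^\star \in E(P_j)$. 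For each color $i$, if $e_j^\star \in E_i$ then necessarily $j \in A_i$, for otherwise step d.2 would have removed $e_j^\star$ from $\widehat Z$; consequently $|C' \cap E_i| \leq |A_i| \leq k_i$.

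\paragraph*{Running time.} With the lemma in hand, along any root-to-leaf path of the recursion tree the value $\lambda$ strictly increases until a terminating call is reached, and $\lambda \leq k$ throughout (else step b returns NO). Hence the recursion depth is at most $k$. The branching factor at a node of flow value $\lambda$ is $\prod_{i=1}^{\ell}\binom{\lambda}{\leq k_i} \leq 2^{\ell \lambda} \leq 2^{\ell k}$, so the tree has at most $2^{\ell k^2}$ nodes, each of cost $O(k(|V(G)|+|E(G)|))$ by Theorem~\ref{thm:ff}, yielding the claimed bound. The main obstacle is the key lemma: one must carefully track that the $Z$-edge-disjointness of the $P_j$ survives the passage to $\widehat Z$, and that the equality of flow values forces each edge of the new min cut to appear on exactly one $P_j$; this is precisely what lets us charge every edge of $C'$ to a path index and thereby to a color budget.
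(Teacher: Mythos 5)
Your proposal is correct and follows essentially the same route as the paper: the same inductive correctness argument via the ``correct'' branch $(A_1^\star,\ldots,A_\ell^\star)$, the same key lemma (the paper's Claim~\ref{cl:cut}) showing that a non-increasing flow value forces the child's min cut to pick one $\widehat Z$-edge per path $P_j$ with $j\in A_i$ whenever that edge has color $i$, hence budget-respecting, and the same depth-$k$, branching-factor-$2^{\ell k}$ accounting. No substantive differences.
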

\begin{proof}
We prove the correctness of the algorithm by showing that it returns a solution if and only if the input instance is a yes-instance. The "only if" direction is obvious, as the algorithm returns only $Z$-respecting budget-respecting $XY$-cuts and $Z \subseteq \bigcup_{i=1}^\ell E_i$ in each recursive call.

We prove the correctness for the "if" direction. Let $C_0$ be a solution, that is, a budget-respecting $XY$-cut.
In the initial call to \textbf{Solve}, $C_0$ is $Z$-respecting. It suffices to inductively show that in each call to \textbf{Solve} such that $C_0$
is $Z$-respecting, either the call returns a solution, or $C_0$ is $\widehat{Z}$-respecting for at least one of the subcalls.
Since $C_0$ is $Z$-respecting, the application of Theorem~\ref{thm:ff} returns objects $(P_i)_{i=1}^\lambda$, $B$, and $C$.
If $C$ is budget-respecting, then the algorithm returns it and we are done.
Otherwise, consider the branch $(A_1,A_2,\ldots,A_\ell)$ where $A_i = \{j | E(P_j) \cap E_i \cap C_0 \neq \emptyset\}$.
Since $C_0$ is budget-respecting, $C_0 \subseteq Z$, and no edge of $Z$ appears on more than one path $P_j$, we have $|A_i| \leq k_i$ for every $i \in [\ell]$.
Thus, $(A_1,A_2,\ldots,A_\ell)$ is a branch considered by the algorithm.
In this branch, the algorithm refines the set $Z$ to $\widehat{Z}$.
By the definition of $A_i$, for every $i \in [\ell]$ and $j \in [\lambda] \setminus A_i$, we have $C_0 \cap E_i \cap E(P_j) = \emptyset$.
Consequently, $C_0$ is $\widehat{Z}$-respecting and we are done.

For the time bound, the following observation is crucial.
\begin{claim}\label{cl:cut}
Consider one recursive call \textbf{Solve}$(Z)$ where the application of Theorem~\ref{thm:ff} in line~(a) returned objects $(P_i)_{i=1}^\lambda$, $B$ and $C$.
Assume that in some recursive subcall \textbf{Solve}$(\widehat{Z})$ invoked in line~(d.3) (Figure~\ref{fig1}),
the subsequent application of Theorem~\ref{thm:ff} in line~(a) of the subcall returned a cut of the same size, that is,
the algorithm of Theorem~\ref{thm:ff} returned a cut $\widehat{C}$ of size $\widehat{\lambda} = \lambda$.
Then the cut $\widehat{C}$ is budget-respecting and, consequently, is returned in line~(c) of the subcall.
\end{claim}
\begin{proof}
Since $|\widehat{C}| = \lambda$ is a $\widehat{Z}$-respecting $XY$-cut, $\widehat{Z} \subseteq Z$,
and every edge $e \in Z$ appears on at most one path $P_i$, we have that $\widehat{C}$ consists of exactly one edge of $\widehat{Z}$ on every path $P_i$,
that is, $\widehat{C} = \{e_1,e_2,\ldots,e_\lambda\}$ and $e_j \in E(P_j) \cap \widehat{Z}$ for every $j \in [\lambda]$.
In other words, the paths $(P_j)_{j=1}^\lambda$ still correspond to a maximum flow from $X$ to $Y$ with edges of $\widehat{Z}$ being of unit capacity
 and edges outside $\widehat{Z}$ of infinite capacity because $(P_j)_{j=1}^\lambda$ are paths satisfying that any two of them are disjoint on $\widehat{Z}\subseteq Z$ and 
 $\lambda$ is still equal to the size of the maximum flow.
If $e_j \in E_i$ for some $j \in [\lambda]$ and $i \in [\ell]$, then by the construction of set $\widehat{Z}$, we have $j \in A_i$.
Consequently, $|\{ j | e_j \in E_i\}| \leq |A_i| \leq k_i$ for every $i \in [\ell]$, and thus $\widehat{C}$ is budget-respecting.
\cqed\end{proof}
Claim~\ref{cl:cut} implies that the depth of the search tree is bounded by $k$, as the algorithm terminates when $\lambda$ exceeds $k$.
At every step, there are at most $(2^{\lambda})^{\ell}\leq (2^{k})^{\ell}$ different tuples $(A_{1},...,A_{\ell})$ to consider.
Consequently, there are $\Oh(2^{(k-1)k\ell})$ nodes of the search tree that enter the loop in line~(d) and $\Oh(2^{k^2\ell})$ nodes
that invoke the algorithm of Theorem~\ref{thm:ff}.
As a result, the running time of the algorithm is $\Oh(2^{\ell k^{2}}\cdot k\cdot (|V(G)|+|E(G)|))$.
\end{proof}

\section{Multi-budgeted important separators with applications}\label{sec:imp}

Similar to the concept of important separators proposed by Marx \cite{DBLP:journals/tcs/Marx06} (see also \cite[Chapter 8]{DBLP:books/sp/CyganFKLMPPS15}),
we define \emph{multi-budgeted important separators} as follows.
\begin{definition}
Let $(G,X,Y,\ell,(E_i,k_i)_{i=1}^\ell)$ be a \textsc{Multi-budgeted cut} instance and let $Z \subseteq \bigcup_{i=1}^\ell E_i$ be a set of deletable edges.
Let $C_{1},C_{2}$ be two minimal $Z$-respecting budget-respecting $XY$-cuts. We say that $C_{1}$ \emph{dominates} $C_{2}$ if
\begin{enumerate}
\item every vertex reachable from $X$ in $G-C_{2}$ is also reachable from $X$ in $G-C_{1}$;
\item for every $i\in [\ell]$, $|C_{1}\cap E_{i}|\leq |C_{2}\cap E_{i}|$.
\end{enumerate}
We say that $\widehat{C}$ is an \emph{important $Z$-respecting budget-respecting $XY$-cut}
if $\widehat{C}$ is a minimal $Z$-respecting budget-respecting $XY$-cut and no other minimal $Z$-respecting budget-respecting $XY$-cut dominates $\widehat{C}$.
$\widehat{C}$ is an \emph{important budget-respecting $XY$-cut} if it is an \emph{important $Z$-respecting budget-respecting $XY$-cut} for
$Z = \bigcup_{i=1}^\ell E_i$.
\end{definition}

Chen et al.~\cite{DBLP:journals/jacm/ChenLLOR08} showed an enumeration procedure for (classic) important separators using similar charging scheme
as the one of the previous section. Our main result in this section is a merge of the arguments from the previous section
with the arguments of Chen et al., yielding the following theorem.
\begin{theorem}\label{thm:imp-enum}
Let $(G,X,Y,\ell,(E_i,k_i)_{i=1}^\ell)$ be a \textsc{Multi-budgeted cut} instance, let $Z \subseteq \bigcup_{i=1}^\ell E_i$ be a set of deletable edges,
and denote $k = \sum_{i=1}^\ell k_i$.
Then one can in $2^{\Oh(k^2 \log k)} (|V(G)|+|E(G)|)$ time enumerate
    a family of minimal $Z$-respecting budget-respecting $XY$-cuts of size $2^{\Oh(k^2 \log k)}$ that contains all important ones.
\end{theorem}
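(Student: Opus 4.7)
The plan is to merge the branching strategy of Section~\ref{sec:cut} with the classical enumeration scheme for important separators due to Chen et al.~\cite{DBLP:journals/jacm/ChenLLOR08}; both procedures charge branching to a flow-increase measure, so they combine naturally. The algorithm will be recursive, maintaining a shrinking set $Z$ of deletable edges. In each call $\mathrm{Enumerate}(Z)$ I first invoke Theorem~\ref{thm:ff} to obtain $\lambda$, paths $P_1,\ldots,P_\lambda$, the set $B$, and the minimum $Z$-respecting cut $C$ closest to $Y$, or conclude that no $Z$-respecting $XY$-cut of size at most $k$ exists.

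Each call will then branch in two ways. First, as in Section~\ref{sec:cut}, for every tuple $(A_1,\ldots,A_\ell)$ with $A_i\subseteq[\lambda]$ and $|A_i|\leq k_i$, I form the reduced set $\widehat{Z}$ exactly as in the algorithm of Figure~\ref{fig1} and recurse on $\widehat{Z}$. Second, whenever $C$ is budget-respecting I append $C$ to the output and additionally run a Chen-et-al.-style step: for each edge $e\in C$ (equivalently, for each vertex on the source-side boundary of $C$) I recurse on an instance where $e$ is forbidden from being cut, which, by the classical structural argument applied to the cut closest to $X$, guarantees that this sub-call sees a strictly larger value of $\lambda$.

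For correctness I will show by induction on the recursion that every important $Z$-respecting budget-respecting cut $C^*$ appears in the output. If $C^*=C$ at the current call, then $C^*$ is output directly. Otherwise, minimality of both cuts gives $C\not\subseteq C^*$, so $C\setminus C^*\neq\emptyset$. The multi-budgeted branch with $A_i=\{j:E(P_j)\cap E_i\cap C^*\neq\emptyset\}$ preserves $C^*$ inside $\widehat{Z}$ by exactly the argument in the proof of Theorem~\ref{thm:cut}, while any Chen-et-al. branch choosing $e\in C\setminus C^*$ preserves $C^*$ inside $Z\setminus\{e\}$. Applying the inductive hypothesis to a suitable descendant then guarantees that $C^*$ eventually becomes the min cut of some sub-instance and is output at that leaf.

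To bound the size of the tree I will use the Chen-et-al. measure $2k-\lambda$: each Chen-et-al. sub-call decreases it by at least one, and by Claim~\ref{cl:cut} each multi-budgeted step either strictly increases $\lambda$ in the sub-call or produces a sub-call where $C$ is budget-respecting and the Chen-et-al. step takes over. The multi-budgeted branching factor at each node is bounded by the number of subsets of $[\ell]\times[\lambda]$ of total size at most $k$, which is $2^{\Oh(k\log k)}$ since $\ell,\lambda\leq k$; each Chen-et-al. step has branching factor $\Oh(k)$; the depth of each layer is $\Oh(k)$. Multiplying yields a recursion tree of size $2^{\Oh(k^2\log k)}$, and therefore the same bound on the output family and running time. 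The main obstacle I anticipate is lifting the structural property ``$R\subseteq R^*$ for every important $R^*$'' of Chen et al.\ to the multi-budgeted setting: multi-budgeted dominance is $\ell$-dimensional Pareto-optimality across colors, and plain submodularity of the cut function only controls the total number of cut edges rather than their color-wise distribution, so I will need to either enumerate a strictly larger family containing all important cuts (which the statement explicitly permits) or apply submodularity color-by-color under additional bookkeeping. It is this slackening that costs the extra $k\log k$ factor in the exponent compared to the classical $4^k$ bound.
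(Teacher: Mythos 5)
Your high-level plan (merge the flow-increase charging with the multi-budgeted branching, output a superset of the important cuts, aim for $2^{\Oh(k^2\log k)}$) matches the paper's, but two essential pieces are missing or wrong. First, the Chen-et-al.\ step you propose does not make progress: forbidding a single edge $e\in C$ (i.e., removing it from $Z$) does \emph{not} guarantee that the maximum flow increases. Already for a path $s\to a\to t$ with both edges deletable, the closest-to-$Y$ minimum cut is $\{(a,t)\}$, and forbidding that edge merely moves the minimum cut to $\{(s,a)\}$ with the same $\lambda$. The classical argument obtains a flow increase by moving the \emph{head} of $e$ into the source set, exploiting the pushed property $R_C\subseteq R_{C^*}$ of important separators --- and that pushed property is exactly what is not available a priori in the multi-budgeted setting, where importance is Pareto-domination across colors rather than closeness to $t$. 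Without a valid secondary measure, your recursion neither terminates within the claimed bound nor gives any reason that the branches preserving $C^*$ ever reach a node where $C^*$ equals the computed cut and is output; ``$C^*$ is preserved in $\widehat{Z}$'' is not the same as ``$C^*$ is enumerated.''

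Second, and more fundamentally, the step you flag as ``the main obstacle I anticipate'' is the actual core of the proof, and you leave it unresolved. The paper's resolution is concrete: the branching data $(A_i)$ records which paths carry a \emph{bottleneck} edge (an edge of the set $B$ from Theorem~\ref{thm:ff}) of each color belonging to $C^*$; only bottleneck edges of the wrong colors are deleted from $Z$; and when the flow value does not increase, the procedure outputs $\widehat{C}$ and recurses on all ways of removing one element from some $A_i$, using $\sum_i|A_i|\le k$ as the secondary measure. In the terminal case $A_i=\{j\mid E_i\cap E(P_j)\cap\widehat{B}\cap C^*\ne\emptyset\}$ for all $i$, one proves $C^*=\widehat{C}$ by taking a minimal cut $D\subseteq\delta^+(R_0\cup\widehat{R})$ and charging each color-$i$ edge $e_j\in D\setminus C^*$ (necessarily the last $\widehat{B}$-edge on $P_j$) to a distinct, strictly earlier color-$i$ edge of $C^*\cap\widehat{B}\cap E(P_j)$, which lies inside $\widehat{R}$ and hence outside $D$; this yields $|D\cap E_i|\le|C^*\cap E_i|$ for every color, so $D$ dominates $C^*$, importance forces $D=C^*$, and then $\widehat{C}\subseteq C^*$ gives equality by minimality. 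This color-by-color charging along the flow paths is the replacement for submodularity that your sketch acknowledges it needs but does not supply; without it the proposal does not go through.
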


Theorem~\ref{thm:dfas} follows from Theorem~\ref{thm:imp-enum} via an analogous arguments as in~\cite{DBLP:journals/jacm/ChenLLOR08}; we postpone
them to Section~\ref{sec:imp2}. First, we focus on the proof of Theorem~\ref{thm:imp-enum}.

\begin{proof}[Proof of Theorem~\ref{thm:imp-enum}.]
Consider the recursive algorithm presented in Figure~\ref{fig2}.
The recursive procedure \textbf{ImportantCut} takes as an input a \textsc{Multi-budgeted Cut} instance
$I = (G,X,Y,\ell,(E_i,k_i)_{i=1}^\ell)$ and a set $Z \subseteq \bigcup_{i=1}^\ell E_i$, with the goal to enumerate
all important $Z$-respecting budget-respecting $XY$-cuts. Note that the procedure may output some more $Z$-respecting budget-respecting $XY$-cuts;
we need only to ensure that
\begin{enumerate}
\item it outputs all important ones,\label{i:imp:all}
\item it outputs $2^{\Oh(k^2 \log k)}$ cuts,\label{i:imp:cnt} and
\item it runs within the desired time.\label{i:imp:time}
\end{enumerate}
The procedure first invokes the algorithm of Theorem~\ref{thm:ff} on $(G,X,Y,k,Z)$, where $k = \sum_{i=1}^\ell k_i$.
If the call returned that there is no $Z$-respecting $XY$-cut of size at most $k$, we can return an empty set.
Otherwise, let $(P_j)_{j=1}^\lambda$, $B$, and $C$ be the computed objects.
We perform a branching step, with each branch labeled with a tuple $(A_1,A_2,\ldots,A_\ell)$ where $A_i \subseteq [\lambda]$ and $|A_i| \leq k_i$ for every $i \in [\ell]$.
A branch $(A_1,A_2,\ldots,A_\ell)$ is supposed to capture important cuts $C_0$ with $\{j | C_0 \cap B \cap E(P_j) \cap E_i \neq \emptyset\} \subseteq A_i$
for every $i \in [\ell]$; that is, for every $i \in [\ell]$ and $j \in [\lambda]$ we guess if $C_0$ contains a \emph{bottleneck} edge of color $i$ on path $P_j$.
All this information (i.e., paths $P_j$, the set $B$, the cut $C$, and the sets $A_i$) are passed to an auxiliary procedure \textbf{Enum}.

The procedure \textbf{Enum} shrinks the set $Z$ according to sets $A_i$.
More formally, for every $i \in [\ell]$ and $j \in [\lambda] \setminus A_i$ we delete from $Z$ all edges from $B \cap E_i \cap E(P_j)$, obtaining a set
$\widehat{Z} \subseteq Z$.
At this point, we check if the reduction of the set $Z$ to $\widehat{Z}$ increased the size of minimum $Z$-respecting $XY$-cut by invoking
Theorem~\ref{thm:ff} on $(G,X,Y,k,\widehat{Z})$ and obtaining objects $(\widehat{P}_j)_{j=1}^{\widehat{\lambda}}, \widehat{B}, \widehat{C}$
or a negative answer.
If the size of the minimum cut increased, that is, $\widehat{\lambda} > \lambda$, we recurse with the original procedure \textbf{ImportantCut}.
Otherwise, we add one cut to $\mathcal{S}$, namely $\widehat{C}$. Furthermore, we try to shrink one of the sets $A_i$ by one and recurse;
that is, for every $i \in [\ell]$ and every $j \in A_i$, we recurse with the procedure \textbf{Enum} on sets $A_{i'}'$ where
$A_i' = A_i \setminus \{j\}$ and $A_{i'}' = A_{i'}$ for every $i' \in [\ell] \setminus \{i\}$.

\begin{figure}[htbp]
\setbox4=\vbox{\hsize32pc \strut \begin{quote}
\vspace*{-5mm} \footnotesize
\textbf{ImportantCut}($I,Z$) \\
Input: A \textsc{Multi-budgeted cut} instance $I = (G,X,Y,\ell,(E_i,k_i)_{i=1}^\ell)$ and a set $Z \subseteq \bigcup_{i=1}^\ell E_i$. \\
Output: a family $\mathcal{S}$ of minimal $Z$-respecting budget-respecting $XY$-cuts that contains all important ones.

1. $\mathcal{S} := \emptyset$; \\
2. apply the algorithm of Theorem~\ref{thm:ff} to $(G,X,Y,k,Z)$ with $k = \sum_{i=1}^\ell k_i$, obtaining either
objects $(P_i)_{i=1}^\lambda$, $B$, and $C$, or an answer NO; \\
3. \textbf{if} an answer NO is obtained, \textbf{then} \textbf{return} $\mathcal{S}$;  \\
4. \textbf{for each} $(A_{1},...,A_{\ell})$ such that for every $i$ in $[\ell]$, $A_{i}\subseteq [\lambda]$ and $|A_{i}|\leq k_i$ \textbf{do} \\
4.1 \hspace*{2mm} $\mathcal{S} :=  \mathcal{S} \cup \mathbf{Enum}(I,Z,(P_j)_{j=1}^\lambda,B,C,(A_i)_{i=1}^\ell)$ \\
5. \textbf{return} $\mathcal{S}$ \\

\textbf{Enum}($I,Z,(P_j)_{j=1}^\lambda,B,C,(A_i)_{i=1}^\ell$) \\
Input: A \textsc{Multi-budgeted cut} instance $I = (G,X,Y,\ell,(E_i,k_i)_{i=1}^\ell)$, a set $Z \subseteq \bigcup_{i=1}^\ell E_i$,
 a family $(P_j)_{j=1}^\lambda$ of paths from $X$ to $Y$ such that every edge of $Z$ appears on at most one path $P_j$,
 a set $B$ consisting of all edges that participate in some minimum $Z$-respecting $XY$-cut, a minimum $Z$-respecting $XY$-cut $C$
  closest to $Y$, and sets $A_i \subseteq [\lambda]$ of size at most $k_i$ for every $i \in [\ell]$ \\
Output: a family $\mathcal{S}$ of minimal $Z$-respecting budget-respecting $XY$-cuts that contains all cuts $C_0$ that are
important $Z$-respecting budget respecting $XY$-cuts and satisfy $\{j | E(P_j) \cap B \cap C_0 \cap E_i \neq \emptyset \} \subseteq A_i$ for every $i \in [\ell]$.

a. $\widehat{Z}:=Z$;   \\
b. \textbf{for each} $i\in [\ell]$ \textbf{do}  \\
\hspace*{11mm} \textbf{for each} $j\in [\lambda]\setminus{A_{i}}$ \textbf{do} \\
\hspace*{16mm} $\widehat{Z} := \widehat{Z} \setminus (B \cap E_i \cap E(P_j))$; \\
c. apply the algorithm of Theorem~\ref{thm:ff} to $(G,X,Y,k,\widehat{Z})$, obtaining either
objects $(\widehat{P}_i)_{i=1}^{\widehat{\lambda}}$, $\widehat{B}$, and $\widehat{C}$ or an answer NO; \\
d. \textbf{if} $\widehat{\lambda}$ exists and $\widehat{\lambda}>\lambda$, \textbf{then} \\
d.1 \hspace*{2mm} $\mathcal{S} := \mathcal{S} \cup$ \textbf{ImportantCut}$(I,\widehat{Z})$; \\
e. \textbf{else} \textbf{if} $\widehat{\lambda}$ exists and equals $\lambda$, \textbf{then} \\
e.1 \hspace*{2mm} $\mathcal{S} := \mathcal{S} \cup \{\widehat{C}\}$; \\
e.2 \hspace*{2mm} \textbf{for each} $i \in [\ell]$ \textbf{do} \\
\hspace*{11mm} \textbf{for each} $j \in A_i$ \textbf{do} \\
\hspace*{16mm} $A_i' := A_i \setminus \{j\}$ and $A_{i'}' := A_{i'}$ for every $i' \in [\ell] \setminus \{i\}$ \\
\hspace*{16mm} $\mathcal{S} := \mathcal{S} \cup \mathbf{Enum}(I,\widehat{Z},(P_j)_{j=1}^\lambda,\widehat{B},\widehat{C},(A_i')_{i=1}^\ell)$.\\
f. \textbf{return} $\mathcal{S}$ \\
\end{quote}

\vspace*{-6mm}
\strut}
$$\boxit{\box4}$$
\vspace*{-9mm}
\caption{FPT algorithm for enumerating important multi-budgeted $Z$-respecting $XY$-cuts} \label{fig2}
\end{figure}

Let us first analyze the size of the search tree. A call to \textbf{ImportantCut} invokes at most $\binom{\lambda \ell}{\leq k} \leq (k\ell+1)^k$ calls
  to \textbf{Enum}. Each call to \textbf{Enum} either falls back to \textbf{ImportantCut} if $\widehat{\lambda} > \lambda$
  or branches into $\sum_{i=1}^\ell |A_i| \leq k\ell$ recursive calls to itself. In each recursive call,
the sum $\sum_{i=1}^\ell |A_i|$ decreases by one. Consequently, the initial call to \textbf{Enum} results in at most
  $(k\ell)^k$ recursive calls, each potentially falling back to \textbf{ImportantCut}.
  Since each recursive call to \textbf{ImportantCut} uses strictly larger value of $\lambda$, which cannot grow larger than $k$, and $\ell \leq k$,
the total size of the recursion tree is $2^{\Oh(k^2 \log k)}$.
Each recursive call to \textbf{Enum} adds at most one set to $\mathcal{S}$, while each recursive call to \textbf{ImportantCut}
and \textbf{Enum} runs in time $\Oh(2^{k\ell} \cdot k \cdot (|V(G)|+|E(G)|))$.
  The promised size of the family $\mathcal{S}$ and the running time bound follows.
It remains to show correctness, that is, that every important $Z$-respecting budget-respecting $XY$-cut is contained in $\mathcal{S}$ returned by a call to \textbf{ImportantCut}$(I,Z)$.

We prove by induction on the size of the recursion tree that (1) every call to \textbf{ImportantCut}$(I,Z)$ enumerates all important $Z$-respecting
budget-respecting $XY$-cuts, and
(2) every call to \textbf{Enum}$(I,Z,(P_j)_{j=1}^\lambda,B,C,(A_i)_{i=1}^\ell)$ enumerates all important
$Z$-respecting budget-respecting $XY$-cuts $C_0$ with the property that $\{j | E_i \cap E(P_j) \cap B \cap C_0 \neq \emptyset\} \subseteq A_i$ for every $i \in [\ell]$.

The inductive step for a call \textbf{ImportantCut}$(I,Z)$ is straightforward. Let us fix an arbitrary important $Z$-respecting budget-respecting $XY$-cut $C_0$.
Since $C_0$ is budget-respecting, $C_0$ is a $Z$-respecting cut of size at most $k$, and thus the initial call to Theorem~\ref{thm:ff} cannot return NO.
Consider the tuple $(A_1,A_2,\ldots,A_\ell)$ where for every $i\in [\ell]$, $\{j|E(P_{j})\cap E_{i}\cap B\cap C_0\}=A_{i}$.
Since $C_0$ is budget-respecting and the paths $P_j$ do not share an edge of $Z$, we have that $|A_i| \leq k_i$ for every $i \in [\ell]$ and the
algorithm considers this tuple in one of the branches.
Then, from the inductive hypothesis, the corresponding call to \textbf{Enum} returns a set containing $C_0$.

Consider now a call to \textbf{Enum}$(I,Z,(P_j)_{j=1}^\lambda,B,C,(A_i)_{i=1}^\ell)$ and an important
$Z$-respecting budget-respecting $XY$-cuts $C_0$ with the property that $\{j | E_i \cap E(P_j) \cap B \cap C_0 \neq \emptyset\} \subseteq A_i$ for every $i \in [\ell]$.
By the construction of $\widehat{Z}$ and the above assumption, $C_0$ is $\widehat{Z}$-respecting.
In particular, the call to the algorithm of Theorem~\ref{thm:ff} cannot return NO.
Hence, in the case when $\widehat{\lambda} > \lambda$, $C_0$ is enumerated by the recursive call to \textbf{ImportantCut} and we are done.
Assume then $\widehat{\lambda} = \lambda$.

For $i \in [\ell]$, let $\widehat{A}_i = \{ j | E_i \cap E(P_j) \cap \widehat{B} \cap C_0 \neq \emptyset\}$. Since $\widehat{Z} \subseteq Z$ but the sizes
of minimum $Z$-respecting and $\widehat{Z}$-respecting $XY$-cuts are the same, we have $\widehat{B} \subseteq B$.
Consequently, $\widehat{A}_i \subseteq A_i$ for every $i \in [\ell]$.

Assume there exists $i \in [\ell]$ such that $\widehat{A}_i \subsetneq A_i$ and let $j \in A_i \setminus \widehat{A}_i$.
Consider then the branch $(i,j$) of the \textbf{Enum} procedure, that is, the recursive call with $A_i' = A_i \setminus \{j\}$
and $A_{i'}' = A_{i'}$ for $i' \in [\ell] \setminus \{i\}$.
 Observe that we have $\{j | E_{i'} \cap E(P_j) \cap \widehat{B} \cap C_0 \neq \emptyset\} \subseteq A_{i'}'$
for every $i' \in [\ell]$ and, by the inductive hypothesis, the corresponding call to \textbf{Enum} enumerates $C_0$.
Hence, we are left only with the case $\widehat{A}_i = A_i$, that is, $A_i = \{j | E_i \cap E(P_j) \cap \widehat{B} \cap C_0 \neq \emptyset\}$ for every $i \in [\ell]$.

We claim that in this case $C_0 = \widehat{C}$. Assume otherwise.
Since $|\widehat{C}| = \widehat{\lambda}=\lambda$ and $\widehat{Z} \subseteq Z$, $\widehat{C}$ contains exactly one edge on every path $P_j$.
Also, $\widehat{C} \subseteq \widehat{B}$ by the definition of the set $\widehat{B}$. Since $\widehat{C}$ is the minimum $\widehat{Z}$-respecting
$XY$-cut that is closest to $Y$, $\widehat{C} = \{e_1,e_2,\ldots,e_\lambda\}$ where $e_j$ is the last (closest to $Y$) edge of $\widehat{B}$
on the path $P_j$ for every $j \in [\lambda]$.

Let $R_0$ and $\widehat{R}$ be the set of vertices reachable from $X$ in $G-C_0$ and $G-\widehat{C}$, respectively.
Let $D$ be a minimal $XY$-cut contained in $\delta^{+}(R_0 \cup \widehat{R})$.
(Note that $\delta^{+}(R_0 \cup \widehat{R})$ is an $XY$-cut because $X\subseteq R_0\cup \widehat{R}$ and $Y\cap (R_0\cup \widehat{R})=\emptyset$.)
Then since $D \subseteq C_0\cup \widehat{C}\subseteq Z$, $D$ is $Z$-respecting.
By definition, every vertex reachable from $X$ in $G-R_0$ is also reachable from $X$ in $G-D$.

We claim that $D$ is budget-respecting and, furthermore, dominates $C_0$.
Fix a color $i \in [\ell]$; our goal is to prove that $|D \cap E_i| \leq |C_0 \cap E_i|$.
To this end, we charge every edge of color $i$ in $D \setminus C_0$ to a distinct edge of color $i$ in $C_0 \setminus D$.
Since $D \subseteq C_0 \cup \widehat{C}$, we have that $D \setminus C_0 \subseteq \widehat{C}$, that is,
      an edge of $D \setminus C_0$ of color $i$ is an edge $e_j$ for some $j \in [\lambda]$ with $e_j \in E_i$ and $e_j \in D \setminus C_0$.

Recall that we are working in the case $A_i = \{j | E_i \cap E(P_j) \cap \widehat{B} \cap C_0 \neq \emptyset\}$.
Since $e_j \in \widehat{C} \subseteq \widehat{Z}$, we have that $j \in A_i$. Hence, there exists $e_j' \in E_i \cap E(P_j) \cap \widehat{B} \cap C_0$.
By the definition of $\widehat{C}$, $e_j$ is the last (closest to $Y$) edge of $\widehat{B}$ on $P_j$.
Since $e_j \notin C_0$, $e_j' \neq e_j$ and $e_j'$ lies on the subpath of $P_j$ between $X$ and the tail of $e_j$.
This entire subpath is contained in $\widehat{R}$ and, hence, $e_j' \notin D$.

We charge $e_j$ to $e_j'$. Since $e_j' \in E(P_j) \cap E_i \cap \widehat{B} \cap (C_0 \setminus D)$, for distinct $j$, the edges $e_j'$ are distinct as the paths $P_j$ do not share an edge belonging to $Z$ and $\widehat{B} \subseteq \widehat{Z} \subseteq Z$.
Consequently, $|D \cap E_i| \leq |C_0 \cap E_i|$.
This finishes the proof that $D$ dominates $C_0$.

Since $C_0$ is important, we have $D = C_0$. In particular, $\widehat{R} \subseteq R_0$.
On the other hand, for every $j \in [\lambda]$ we have that $e_j \in \widehat{C} \subseteq \widehat{Z} \subseteq Z \subseteq \bigcup_{i=1}^\ell E_i$.
In particular, there exists $i \in [\ell]$ such that
$e_j \in E_i$ and $j \in A_i$. Hence, we also have $E_i \cap E(P_j) \cap \widehat{B} \cap C_0 \neq \emptyset$. But the entire subpath of $P_j$ from $X$
to the tail of $e_j$ lies in $\widehat{R} \subseteq R_0$, while $e_j$ is the last edge of $\widehat{B}$ on $P_j$. Hence, $e_j \in C_0$.
Since the choice of $j$ is arbitrary, $\widehat{C} \subseteq C_0$. Since $\widehat{C}$ is an $XY$-cut and $C_0$ is minimal, $\widehat{C} = C_0$ as claimed.

This finishes the proof of Theorem~\ref{thm:imp-enum}.
\end{proof}

\subsection{Applications}\label{sec:imp2}

The \textsc{Directed Feedback Arc Set} problem is a classic problem that played major role in the development of parameterized complexity.
In this problem, given a directed graph $G$ and an integer $k$, the problem is to decide if there exists an arc set $S$ of size at most $k$ such that $G-S$ has no directed cycles.
The multi-budgeted variant is defined as follows.

\begin{quote}
\textsc{Multi-budgeted Directed Feedback Arc Set}\\
\textbf{Input:} A directed graph $G$, an integer $\ell$, and for every $i \in \{1,2,\ldots,\ell\}$ a set $E_i \subseteq E(G)$ and an integer $k_i \geq 1$.\\
\textbf{Question:} Is there a set of arcs $S\subseteq \bigcup_{i=1}^{\ell}E_{i}$ such that there is no directed cycle in $G-S$ and for every $i\in [\ell]$, $|S\cap E_i|\leq k_{i}$.\\
\end{quote}

The first FPT algorithm for the \textsc{Directed Feedback Arc Set} problem is given by Chen et al.~\cite{DBLP:journals/jacm/ChenLLOR08}.
  In their algorithm, they use iterative compression and  reduce the \textsc{Directed Feedback Arc Set} compression problem to the \textsc{Skew Edge Multicut} problem.
They propose a pushing lemma for \textsc{Skew Edge Multicut} and solve \textsc{Skew Edge Multicut} through enumerating important cuts.
We show that for the multi-budgeted variant, a similar strategy works with the help of Theorem~\ref{thm:imp-enum}.
Formally, the \textsc{Multi-budgeted Skew Edge Multicut} problem is defined as follows.

\begin{quote}
\textsc{Multi-budgeted Skew Edge Multicut}\\
\textbf{Input:} A directed graph $G$, an integer $\ell$, for every $i \in \{1,2,\ldots,\ell\}$ a set $E_i \subseteq E(G)$ and an integer $k_i \geq 1$, and a sequence $(s_i,t_i)_{i=1}^q$ of terminal pairs.\\
\textbf{Question:} Is there a set of arcs $C\subseteq \bigcup_{i=1}^{\ell}E_{i}$ such that there is no directed path from $s_{i}$ to $t_{j}$ for any $i\geq j$ in $G-C$ and for every $i\in [\ell]$, $|C\cap E(i)|\leq k_{i}$?\\
\end{quote}

As in the case of \textsc{Multi-budgeted Cut}, the assumption that $k_i \geq 1$ for every $i \in [\ell]$ implies $\ell \leq k$.

We start by observing a direct corollary of the maximality criterium in the definition of important budget-respecting separators.

\begin{lemma}\label{existing}
Given an instance $(G,X,Y,\ell,(E_i,k_i)_{i=1}^\ell)$ of \textsc{Multi-budgeted cut}, for every minimal budget-respecting $XY$-cut $C$ there exists an important
  budget-respecting $XY$-cut $C'$ that dominates $C$.
\end{lemma}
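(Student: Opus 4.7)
The plan is to prove Lemma~\ref{existing} by showing that domination is a partial order on minimal budget-respecting $XY$-cuts, and then taking a maximal element in the chain starting at $C$. First I would verify two easy properties of the domination relation from the definition. Reflexivity is immediate, since every cut vacuously dominates itself. Transitivity follows by chaining both conditions in the definition independently: if $C_1$ dominates $C_2$ and $C_2$ dominates $C_3$, then any vertex reachable from $X$ in $G-C_3$ is reachable in $G-C_2$ and hence in $G-C_1$, and for every $i \in [\ell]$ we have $|C_1 \cap E_i| \leq |C_2 \cap E_i| \leq |C_3 \cap E_i|$. Thus if we can produce \emph{any} sequence $C = C_0, C_1, C_2, \ldots, C_r = C'$ with each $C_{j+1}$ dominating $C_j$ and with $C'$ not strictly dominated by any other minimal budget-respecting $XY$-cut, then transitivity delivers $C'$ as the desired important cut.

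The main obstacle is to show that such a sequence exists, i.e., that the procedure ``if the current cut is not important, replace it by some strictly larger cut in the domination order'' terminates. For this I would use the observation from the preliminaries: if $D$ is a minimal $XY$-cut and $R(D)$ is the set of vertices reachable from $X$ in $G - D$, then $D = \delta^+(R(D))$, so $D$ is determined by $R(D)$. Now suppose $D_1$ dominates $D_2$ with $D_1 \neq D_2$. By the first clause of domination, $R(D_2) \subseteq R(D_1)$. If equality held, then $D_1 = \delta^+(R(D_1)) = \delta^+(R(D_2)) = D_2$, contradicting $D_1 \neq D_2$. Hence $R(D_2) \subsetneq R(D_1)$, so each strict ascent in the domination order strictly enlarges the reachable set.

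Since $V(G)$ is finite, any strictly ascending chain has length at most $|V(G)|$, so repeatedly replacing the current cut with a strict dominator (which is also minimal and budget-respecting, as required by the definition of domination) must terminate. The terminal cut $C'$ is, by construction, a minimal budget-respecting $XY$-cut that is not dominated by any other minimal budget-respecting $XY$-cut, i.e., $C'$ is important. By transitivity along the chain, $C'$ dominates $C$, which completes the argument.
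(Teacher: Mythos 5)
Your proof is correct, and it supplies exactly the argument the paper leaves implicit (the paper states the lemma without proof, calling it a direct corollary of the maximality criterion in the definition of important cuts). The key step — that a strict dominator of a minimal cut must strictly enlarge the set of vertices reachable from $X$, because a minimal cut $D$ equals $\delta^+(R(D))$ and is therefore determined by its reachable set — is the standard well-foundedness argument for important separators, and together with transitivity it yields the maximal element you need.
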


Similar to the pushing lemma for \textsc{Skew Edge Multicut}~\cite{DBLP:journals/jacm/ChenLLOR08}, we propose a pushing lemma for the multi-budgeted variant.
\begin{lemma}\label{pushing-lemma}
Every yes-instance $I=(G,\ell,(E_i,k_i)_{i=1}^\ell,(s_i,t_i)_{i=1}^q)$ of \textsc{Skew Edge Multicut} admits a solution
  that contains an important budget-respecting $XY$-cut for $X = \{s_q\}$ and $Y=\{t_1,t_2,\ldots,t_q\}$.
\end{lemma}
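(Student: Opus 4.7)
The plan is to start from an arbitrary solution $S$ of $I$, identify inside $S$ a minimal $XY$-cut for $X = \{s_q\}$ and $Y = \{t_1, \ldots, t_q\}$, push it to an important one using Lemma~\ref{existing}, and argue that after swapping we still have a valid skew multicut.

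Concretely, I would let $R$ be the set of vertices reachable from $s_q$ in $G - S$ and set $C := \delta^{+}(R)$. Since $S$ is a skew multicut, every $t_j$ is unreachable from $s_q$ in $G - S$ (the pair $(i,j) = (q,j)$ satisfies $i \geq j$), so $Y \cap R = \emptyset$ and $C$ is indeed an $XY$-cut; moreover every edge of $C$ leaves $R$ in $G$, and leaving $R$ in $G - S$ is impossible, hence $C \subseteq S$. Thus $C$ is a minimal budget-respecting $XY$-cut, and Lemma~\ref{existing} supplies an important budget-respecting $XY$-cut $C'$ that dominates $C$. Let $R'$ be the vertices reachable from $s_q$ in $G - C'$; domination gives $R \subseteq R'$, and minimality of $C'$ yields $C' = \delta^{+}(R')$. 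Define the candidate replacement solution $S' := (S \setminus C) \cup C'$.

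The budget bound is immediate: for every color $i$, $|S' \cap E_i| \leq |S \cap E_i| - |C \cap E_i| + |C' \cap E_i| \leq |S \cap E_i| \leq k_i$, using condition~(2) of domination.

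The step I expect to be the main obstacle, and where the argument must be done carefully, is checking that $S'$ is still a skew multicut. I would argue this by contradiction: suppose $P$ is an $s_i \to t_j$ path in $G - S'$ with $i \geq j$, and split into two cases according to whether $P$ visits $R'$. If $P$ enters $R'$, then since $t_j \notin R'$ (as $C'$ is an $XY$-cut), $P$ must also leave $R'$, and the departing edge lies in $\delta^{+}(R') = C' \subseteq S'$, contradicting $P \subseteq G - S'$. Otherwise $P$ avoids $R'$ entirely; then, since $R \subseteq R'$, no edge of $P$ has its tail in $R$, so $P$ uses no edge of $C = \delta^{+}(R)$. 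Combined with the fact that $P$ uses no edge of $S \setminus C \subseteq S'$, this yields that $P$ is also a path in $G - S$, contradicting $S$ being a skew multicut. Both cases give contradictions, so $S'$ is a solution containing the important budget-respecting $XY$-cut $C'$, as required.
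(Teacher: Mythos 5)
Your proposal follows the same route as the paper's proof: take the boundary cut of the set $R$ reachable from $s_q$ in $G-S$, replace it by a dominating important budget-respecting cut via Lemma~\ref{existing}, and verify with the same two-case path analysis (path meets $R'$ versus path avoids $R'\supseteq R$) that the swap is still a skew multicut; the budget bookkeeping via condition~(2) of domination is also identical. The one inaccuracy is your assertion that $C=\delta^{+}(R)$ is itself a \emph{minimal} $XY$-cut. This can fail: an edge $(u,v)\in\delta^{+}(R)$ may have $v$ unable to reach $Y$ once the remaining boundary edges are deleted (e.g.\ $v$ a sink outside $Y$), in which case $\delta^{+}(R)$ properly contains a smaller $XY$-cut. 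Minimality is needed to invoke Lemma~\ref{existing}, so this step as written is unjustified. The paper handles exactly this point by picking a minimal $XY$-cut $D\subseteq\delta^{+}(R)$ and setting $S'=(S\setminus D)\cup D^{*}$; your two-case argument goes through verbatim with $D$ in place of $C$ (a path disjoint from $R'\supseteq R$ still uses no edge of $D\subseteq\delta^{+}(R)$, hence none of $S$), so this is a one-line local repair rather than a structural flaw.
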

\begin{proof}
Let $C$ be a solution to $I$.
Let $X = \{s_q\}$, $Y=\{t_{1},...,t_{q}\}$, and $R$ be the set of vertices reachable from $s_{q}$ in $G-C$.
Since $C$ is a solution, $\delta^{+}(R)\subseteq C$ is a budget-respecting $XY$-cut; let $D \subseteq \delta^{+}(R)$ be a minimal one.
By Lemma~\ref{existing}, there exists an important budget-respecting $XY$-cut $D^*$ dominating $D$.
Let $R^*$ be the set of vertices reachable from $s_q$ in $G-D^*$.
We claim that $C^* := (C \setminus D) \cup D^*$ is a solution to $I$ as well.

Suppose for contradiction that there is a directed path $P$ from $s_{i}$ to $t_{j}$ for some $i\geq j$ in $G-C^{*}$. If $P$ contains one vertex of $R^*$, it contradicts that $D^{*}$ is an $XY$-cut
because $P$ must contain one edge of $D^*$. Thus $P$ is disjoint from $R^*$. Since $R\subseteq R^*$, $P$ is disjoint from $R$, and hence $P$ is disjoint from $D$.
Since $P$ is not cut by $C^{*}=(C\setminus D)\cup D^*$, $P$ is not cut by $C\setminus D$. It follows that $P$ is not cut by $C=(C\setminus D)\cup D$, contradicting that $C$ is a solution.

To complete the proof, note that for every $i \leq [\ell]$ we have $|D^* \cap E_i| \leq |D \cap E_i|$ since $D'$ dominates $D$, and hence $|C^* \cap E_i| \leq |C \cap E_i|$.
Consequently, $C^*$ is budget-respecting.
\end{proof}

Lemma~\ref{pushing-lemma} yields the following branching strategy.
\begin{lemma}\label{lem:skew}
There is an FPT algorithm for \textsc{Multi-budgeted Skew Edge Multicut} running in time $2^{\Oh(k^3 \log k))} \cdot (|V(G)| + |E(G)|)$.
\end{lemma}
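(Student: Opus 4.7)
The plan is to follow the standard reduction of \textsc{Skew Edge Multicut} to important-separator enumeration (as in Chen et al.~\cite{DBLP:journals/jacm/ChenLLOR08}) but with Theorem~\ref{thm:imp-enum} and Lemma~\ref{pushing-lemma} in place of the classical counterparts. On input $(G, \ell, (E_i, k_i)_{i=1}^\ell, (s_i, t_i)_{i=1}^q)$, the algorithm returns YES if $q = 0$; otherwise it sets $X := \{s_q\}$, $Y := \{t_1, \ldots, t_q\}$, $Z := \bigcup_{i=1}^\ell E_i$, invokes Theorem~\ref{thm:imp-enum} to enumerate a family $\mathcal{C}$ of at most $2^{\Oh(k^2 \log k)}$ minimal budget-respecting $XY$-cuts that contains every important one, and, for each $C \in \mathcal{C}$, recurses on the instance obtained by deleting the edges of $C$ from $G$ and from every $E_i$, lowering each budget $k_i$ to $k_i - |C \cap E_i|$, and dropping the pair $(s_q, t_q)$.

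Correctness is immediate from Lemma~\ref{pushing-lemma}: a yes-instance has a solution $S$ that contains some important budget-respecting $XY$-cut $C_0$, and in the branch corresponding to $C_0$ the set $S \setminus C_0$ is a solution to the subinstance on pairs $(s_i, t_i)_{i=1}^{q-1}$ inside $G - C_0$ with the reduced budgets; the pairs involving $s_q$ are handled by $C_0$ itself, and all other pairs are still cut because $G - S = (G - C_0) - (S \setminus C_0)$. For the size of the recursion tree, the crucial observation is that whenever $C \in \mathcal{C}$ is nonempty the parameter $k = \sum_i k_i$ strictly decreases; the only way $\emptyset$ appears in $\mathcal{C}$ is when $s_q$ is already disconnected from $Y$, in which case $\emptyset$ is the unique important cut and no genuine branching occurs. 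Hence every root-to-leaf path contains at most $k$ genuine branching nodes, each with branching factor $2^{\Oh(k^2 \log k)}$, for a total of $2^{\Oh(k^3 \log k)}$ branching leaves.

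The main obstacle will be keeping the dependence on $|V(G)| + |E(G)|$ linear rather than polynomial. The calls to Theorem~\ref{thm:imp-enum} at genuine branching nodes contribute $2^{\Oh(k^2 \log k)}(|V(G)| + |E(G)|)$ each and sum to the desired $2^{\Oh(k^3 \log k)}(|V(G)| + |E(G)|)$, but between two consecutive genuine branching events the algorithm may deterministically drop up to $\Theta(q)$ pairs, and a naive reachability test for each such drop would introduce a spurious factor of $q$. I would handle this by amortizing the non-branching drops at the top of each branching node: a single backward reachability computation in the current graph costs $\Oh(|V(G)| + |E(G)|)$ and suffices to identify, and collapse in one sweep, the entire run of indices $i$ for which $s_i$ is already separated from $\{t_1, \ldots, t_i\}$, before the next invocation of Theorem~\ref{thm:imp-enum}. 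With this implementation detail in place, summing over the $2^{\Oh(k^3 \log k)}$ branching leaves yields the claimed running time of $2^{\Oh(k^3 \log k)}(|V(G)| + |E(G)|)$.
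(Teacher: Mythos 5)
Your proposal is correct and follows essentially the same route as the paper: branch on the important budget-respecting $\{s_q\}$--$\{t_1,\ldots,t_q\}$ cuts supplied by Lemma~\ref{pushing-lemma} and Theorem~\ref{thm:imp-enum}, observe that every nonempty cut decreases $k=\sum_i k_i$ so the genuine branching depth is at most $k$, and dispose of the degenerate already-separated sources in amortized linear time. The only (cosmetic) difference is in that last step: the paper runs a forward DFS from $s_q$ and deletes the visited, terminal-free part of the graph so that the non-branching work is charged to deleted edges, whereas you collapse the whole run of separated indices with one backward reachability sweep per branching node; both yield the claimed $2^{\Oh(k^3\log k)}(|V(G)|+|E(G)|)$ bound.
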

\begin{proof}
We perform a recursive branching algorithm where the budgets $k_i$ will decrease, thus we allow instances with the inequalities $k_i \geq 1$ violated.
If $k_i < 0$ for some $i \in [\ell]$, then we can answer NO. Otherwise, if $q = 0$, then we can answer YES.
Otherwise, perform a depth-first search from $s_q$. If no terminal $t_i$ has been reached, delete the visited vertices (as they are not contained in any $s_j$-to-$t_{j'}$ path for any $1 \leq j,j' \leq q$)
together with $t_q$ (as there is no $s_q$-to-$t_q$ path in the graph),
decrease $q$ by one and restart the algorithm.
Since this operation can be performed in time linear in the size of the deleted part of the graph, in total it takes linear time.

Otherwise, proceed as follows. By Lemma~\ref{pushing-lemma} if the input instance is a yes-instance, there is a solution $C^{*}$ which contains an important budget-respecting
$s_q\{t_1,t_2,\ldots,t_q\}$-cut.
By Theorem~\ref{thm:imp-enum}, we can enumerate in time $2^{\Oh(k^2 \log k)} (|V(G)|+|E(G)|)$ a set of minimal budget-respecting $XY$-cuts $\mathcal{S}$
of size $2^{\Oh(k^2 \log k)}$ that contains all important ones.
We invoke this enumeration, and branch on the choice of important budget-respecting $s_q\{t_1,t_2,\ldots,t_q\}$-cut contained in the sought solution.
In a branch where a cut $D$ is chosen, we delete $D$ from the graph and decrease each budget $k_i$ by $|D \cap E_i|$.
Since at least one terminal $t_i$ is reachable from $s_q$, in every branch the cut $D$ is nonempty and thus $k = \sum_{i=1}^\ell k_i$ decreases by at least one.
Consequently, the depth of the recursion is bounded by $k$. The running time bound follows.
\end{proof}

We now use the algorithm of Lemma~\ref{lem:skew} to give an algorithm for \textsc{Multi-budgeted Directed Feedback Arc Set},
   completing the proof of Theorem~\ref{thm:dfas}.
\begin{lemma}
\textsc{Multi-budgeted Directed Feedback Arc Set} can be solved in time $2^{\Oh(k^3 \log k)} (|V(G)|+|E(G)|)$.
\end{lemma}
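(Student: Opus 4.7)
The plan is to adapt Chen et al.'s reduction~\cite{DBLP:journals/jacm/ChenLLOR08} from \textsc{Directed Feedback Arc Set} to \textsc{Skew Edge Multicut} via iterative compression to the multi-budgeted setting, and then invoke Lemma~\ref{lem:skew} on each resulting instance.

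I will set up iterative compression over the edges. Enumerate $E(G) = \{a_1, \ldots, a_m\}$ and let $G_j$ denote the subgraph with edge set $\{a_1, \ldots, a_j\}$. Inductively maintain a budget-respecting feedback arc set $W_j$ of $G_j$ of cardinality at most $k$; at step $j$, the set $W_{j-1} \cup \{a_j\}$ is a feedback arc set of $G_j$ of cardinality at most $k+1$ (not necessarily budget-respecting), and it remains to compress it to a budget-respecting one of size at most $k$. Since any budget-respecting feedback arc set of $G$ restricts to a budget-respecting feedback arc set of $G_j$ of size at most $k$, a failed compression step is a valid NO-answer for the whole instance.

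The compression subroutine, given $G_j$ together with a feedback arc set $W$ of cardinality at most $k+1$, branches over the guess $Z := W \cap S$ for the intersection with the sought solution $S$. There are at most $2^{k+1}$ branches; each branch is discarded unless $|Z \cap E_i| \leq k_i$ for every $i$, and in a surviving branch the budgets are reduced to $k_i' := k_i - |Z \cap E_i|$. The residual task is to find $S' \subseteq E(G_j) \setminus W$ respecting $(k_i')_i$ such that $G_j - Z - S'$ is acyclic. I reduce this to \textsc{Multi-budgeted Skew Edge Multicut} exactly as in~\cite{DBLP:journals/jacm/ChenLLOR08}: fix a topological order $\sigma$ of $G_j - W$, list $W \setminus Z = \{(u_1, v_1), \ldots, (u_r, v_r)\}$ in the order induced by $\sigma$, set terminal pairs $(s_i, t_i) := (v_i, u_i)$, and restrict each color class to $E_i \setminus W$ so that edges of $W \setminus Z$ are rendered undeletable. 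Any cycle in $G_j - Z - S'$ decomposes into edges of $W \setminus Z$ alternating with paths in $G_j - W - S'$, and the ordering along $\sigma$ makes such a decomposition witness a violation of the skew multicut condition (and conversely, as in Chen et al.), yielding an equivalence-preserving reduction.

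Invoking Lemma~\ref{lem:skew} then solves the constructed skew multicut instance in time $2^{\Oh(k^3 \log k)} (|V(G)|+|E(G)|)$, and aggregating the $2^{k+1}$ compression branches together with the iterations of iterative compression keeps the overall running time within the claimed bound after absorbing polynomial factors into $2^{\Oh(k^3 \log k)}$. The main obstacle I anticipate is purely bookkeeping: threading the multi-budgeted constraint consistently through the compression and the reduction, namely that budgets are reduced by $Z$'s contribution, that the edges of $W \setminus Z$ are removed from every color class so as to be undeletable by the skew multicut solver, and that color information is correctly preserved when passing to the skew multicut instance. The structural argument establishing equivalence of acyclicity and the skew multicut condition is a purely combinatorial statement about directed cycles in a topologically ordered residual graph and transfers verbatim from the uncolored setting.
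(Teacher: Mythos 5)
There is a genuine gap in the compression step: you \emph{fix} the order of the surviving modulator arcs $W \setminus Z = \{(u_1,v_1),\ldots,(u_r,v_r)\}$ using a topological order $\sigma$ of $G_j - W$, but the order that matters is the one these arcs take in a topological order of the \emph{final} acyclic graph $G_j - Z - S'$, which is not determined by $G_j - W$ and must be branched over. Concretely, suppose $G_j - W$ imposes no constraints among $u_1,v_1,u_2,v_2$ and $\sigma$ happens to list arc $1$ before arc $2$; your skew condition then forbids any $v_2 \to u_1$ path while permitting $v_1 \to u_2$ paths. If $G_j$ contains an uncuttable $v_2 \to u_1$ path and no other relevant paths, the true instance is a yes-instance with $S' = \emptyset$ (topologically order $u_2, v_2, \ldots, u_1, v_1$), yet your skew multicut instance in this branch is infeasible, and the branch with the reversed arc order is never explored. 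So the ``conversely, as in Chen et al.'' direction fails: the reduction is over-constrained and can turn yes-instances into no-instances. The fix is exactly the step you dropped from Chen et al.: enumerate all $r! \leq (k+1)!$ orderings of $W \setminus Z$ (this is what the paper does, branching over all permutations of the modulator), which still fits inside $2^{\Oh(k^3\log k)}$.

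Separately, your route differs from the paper's in a way that costs you the stated running time. The paper does not use iterative compression at all: it makes a single call to a linear-time FPT algorithm for (uncolored) \textsc{Directed Feedback Vertex Set} to obtain a vertex modulator $W$ with $|W| \leq k$ (valid because the heads of any budget-respecting arc solution form such a set), branches over the $k!$ permutations of $W$, and splits each $w_i$ into $s_i,t_i$ joined by an uncolored, hence undeletable, arc $(t_i,s_i)$ before invoking Lemma~\ref{lem:skew}. Your edge-by-edge iterative compression performs $|E(G)|$ compression rounds, each costing $\Omega(|V(G)|+|E(G)|)$, so the total is quadratic in the graph size; that extra factor of $|E(G)|$ is not a ``polynomial factor absorbed into $2^{\Oh(k^3\log k)}$'' and the claimed bound $2^{\Oh(k^3\log k)}(|V(G)|+|E(G)|)$ is not met. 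Once you add the permutation branching, your argument yields a correct FPT algorithm, but to match the lemma as stated you should replace iterative compression with a single modulator computation as in the paper.
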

\begin{proof}
Let $I = (G,\ell,(E_i,k_i)_{i=1}^\ell)$ be an input instance.
  We start by applying the algorithm of Lokshtanov et al.~\cite{LokshtanovRS16} for the classic \textsc{Directed Feedback Vertex Set} on $G$
  with parameter $k = \sum_{i=1}^\ell$. If the call returned that there is no solution, we can safely return NO.
  Otherwise, let $W$ be the returned solution: $|W| \leq k$ and $G-W$ is acyclic.

Suppose $I$ is a yes-instance and there is a solution $S$.
Then $G-S$ is a directed acyclic graph, admitting a topological ordering of $V(G)$.
This ordering indices a permutation of the vertices in $W$.
In our algorithm, we branch on every permutation of the vertices in $W$, ensuring that at least one of the permutation is the same as the permutation induced by the topological ordering of $G-S$.
Let $w_{1},...,w_{|W|}$ be an arbitrary permutation of the vertices in $W$.
We construct a graph $G'$ as follows. For each $i\in [|W|]$, we replace every vertex $w_{i}$ with two vertices $s_{i},t_{i}$, every edge $(w_{i},a)$ with $(s_{i},a)$ of the same color and
every edge $(b,w_{i})$ with $(b,t_{i})$ of the same color.
Then we add a directed edge $(t_{i},s_{i})$ for each $i\in [|W|]$ with no color.
In this manner, we construct a \textsc{Multi-budgeted Skew Edge Multicut} instance $I' = (G',\ell,(E_i',k_i)_{i=1}^\ell,(s_i,t_i)_{i=1}^{|W|})$ corresponding to the permutation $w_{1},...,w_{|W|}$.

We claim that the input instance $I$ of \textsc{Multi-budgeted Directed Feedback Arc Set} is a yes-instance if and only if there exists one permutation $w_{1},...,w_{|W|}$ of $W$ such that
the corresponding \textsc{Multi-budgeted Skew Edge Multicut} instance $I'$ is a yes-instance.
For the "only if direction", let $S$ be a solution to $I$.
We have a topological ordering of $V(G)$, inducing an ordering $w_{1},...,w_{|W|}$ on $W$.
For this ordering, let $I'$ be the corresponding instance of \textsc{Multi-budgeted Skew Edge Multicut}.
According to the way we construct $G'$, every edge in $S$ has a corresponding edge in $G'$. Let $S'$ be the set of the corresponding edges in $G'$.
We claim that $S'$ is a solution for $I'$. Obviously $S'$ is budget-respecting.
Suppose for contradiction that there is a directed path $P$ from $s_{i}$ to $t_{j}$ for some $i\geq j$ in $G'-S'$.
If $i=j$, then the corresponding edges of $P$ form a directed cycle passing through $w_{i}$ in $G-S$, a contradiction.
Suppose that $i>j$. If $P$ goes through some edge in $\{(t_{i},s_{i})|i\in [|W|]\}$, then there must be a subpath of $P'$ from $s_{i'}$ to $t_{j'}$ such that $i'>j'$
and $P'$ contains no edges in $\{(t_{i},s_{i})|i\in [|W|]\}$.
Then the corresponding edges of $P'$ is a directed path from $w_{i'}$ to $w_{j'}$, contradicting that $w_{i'}$ is later than $w_{j'}$ in the topological ordering of $V(G)$ after removing $S$.

For the "if direction", suppose that $S'$ is a solution for $I'$ and $w_{1},...,w_{|W|}$ is the corresponding ordering of $W$.
Let $S$ be the set of edges consisting of the corresponding edges of $S'$. We claim that $S$ is the solution for $I$.
Obviously $S$ is budget-respecting.
Suppose that there is a cycle $Q$ in $G-S$. Since $W$ is a feedback vertex set for $G$, $Q$ must go through at least one vertex in $W$.
Suppose that $Q$ goes through a vertex in $W$, namely $w_{i}$. Then we can find a path from $s_{i}$ to $t_{i}$ in $G'-S'$,
        contradicting that $S'$ is a solution to $I'$.

This finishes the proof of the lemma and of the whole Theorem~\ref{thm:dfas}.
\end{proof}

\section{Bound on the number of solutions closest to $t$}\label{sec:bound}

\newcommand{\maze}{\mathcal{U}}

In this section we prove Theorems~\ref{thm:bnd-weights} and~\ref{thm:bnd-chain}.
The central definition of this section is the following (see also Figure~\ref{fig:bowtie}).
\begin{definition}
Let $G$ be a directed graph with distinguished vertices $s$ and $t$ and let $k$ be an integer.
An $a$-\emph{bowtie} is a sequence $C_1,C_2,\ldots,C_a$ of pairwise disjoint minimal $st$-cuts of size $k$ each such that each cut $C_i$ can be partitioned $C_i = A_i \uplus B_i$ such that
  for every $1 \leq i < j \leq a$, the set $A_i$ is exactly the set of edges of $C_i$ reachable from $s$ in $G-C_j$ and $B_j$ is exactly the set of edges of $C_j$ reachable from $s$ in $G-C_i$.
\end{definition}
\begin{figure}[htbp]
\begin{center}
\includegraphics[width=.6\linewidth]{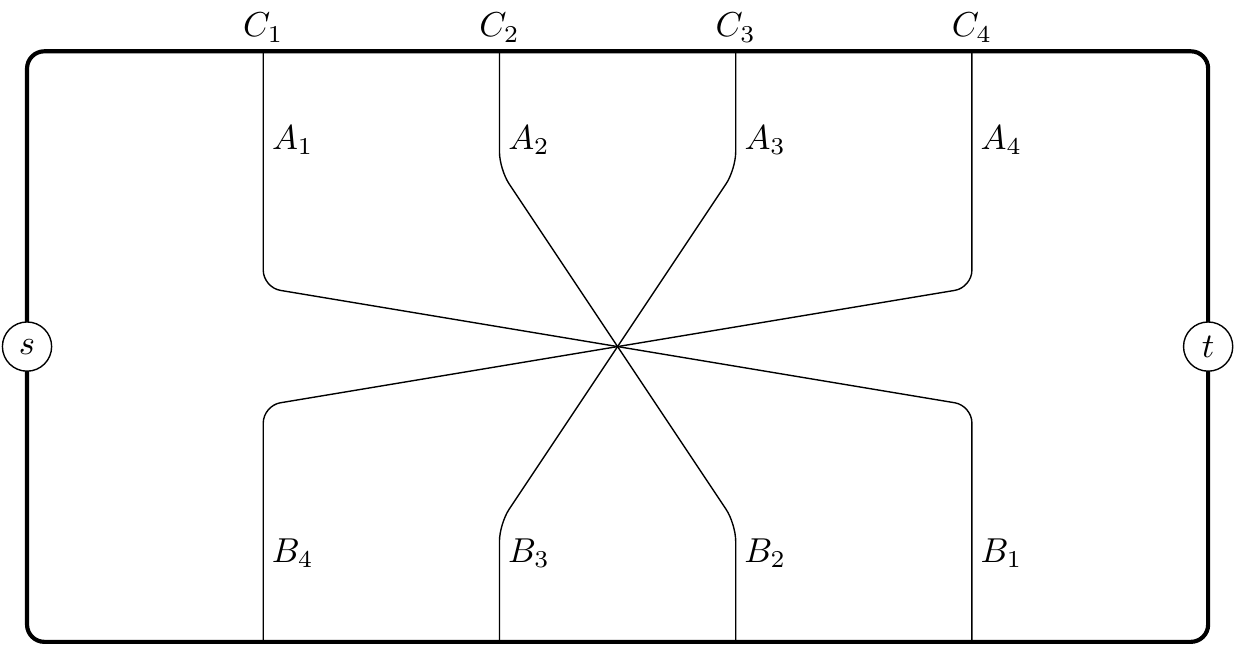}
\end{center}
\caption{A schematic picture of a $4$-bowtie.}\label{fig:bowtie}
\end{figure}

Our main graph-theoretic result is the following.
\begin{theorem}\label{thm:bowtie-cuts}
For every integers $a,k \geq 1$ there exists an integer $g$ such that for every
directed graph $G$ with distinguished $s,t \in V(G)$, and a family $\maze$ of pairwise disjoint minimal $st$-cuts of size $k$ each, if $|\maze| \geq g$, then $\maze$ contains an $a$-bowtie.
\end{theorem}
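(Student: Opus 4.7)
The plan is to prove this theorem by a Ramsey-theoretic refinement of $\mathcal{U}$. For each ordered pair of cuts $(C, C')$, let $A^{C'}_C := \{e \in C : \mathrm{tail}(e) \in R_{C'}\}$, where $R_{C'}$ is the set of vertices reachable from $s$ in $G - C'$; this is the natural candidate for the $A$-part of $C$ when $C$ precedes $C'$ in a bowtie, with $B^{C'}_C := C \setminus A^{C'}_C$ the complementary candidate. The goal is to refine $\mathcal{U}$ to a sub-family in which these candidate sets are simultaneously consistent across a long ordered sub-sequence.

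The first step is to stabilize pair sizes and extract a linear order. I color each unordered pair $\{C, C'\} \in \binom{\mathcal{U}}{2}$ by the multiset $\{|A^{C'}_C|,|A^C_{C'}|\} \subseteq \{0,1,\ldots,k\}$, giving $O(k^2)$ colors. Ramsey's theorem for pairs passes to a large sub-family $\mathcal{U}_1$ in which every pair has the same multiset $\{\alpha,\beta\}$. If $\alpha \neq \beta$, orient each pair by sending $C \to C'$ exactly when $|A^{C'}_C| = \alpha$; the resulting tournament on $\mathcal{U}_1$ contains, by the classical fact that every tournament on $2^m$ vertices has a transitive sub-tournament on $m+1$ vertices, a linearly ordered sub-family $C_1 \prec C_2 \prec \ldots \prec C_N$ with $|A^{C_j}_{C_i}| = \alpha$ and $|A^{C_i}_{C_j}| = \beta$ for every $i < j$ (the case $\alpha=\beta$ is immediate with any linear order).

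The second step is to apply Ramsey for ordered triples along this order, stabilizing the sets themselves. I color each triple $i < j < m$ by the Boolean triple answering: (1) does $A^{C_j}_{C_i} = A^{C_m}_{C_i}$? (2) does $A^{C_i}_{C_m} = A^{C_j}_{C_m}$? (3) do $A^{C_i}_{C_j}$ and $A^{C_m}_{C_j}$ partition $C_j$? A monochromatic sub-sequence with ``no'' on (1) forces the $\alpha$-subsets $\{A^{C_j}_{C_1}\}_{j>1}$ of $C_1$ to be pairwise distinct, bounding its length by $\binom{k}{\alpha}+1$; the analogous bound $\binom{k}{\beta}+1$ applies to ``no'' on (2). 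So for a long enough monochromatic sub-sequence, (1) and (2) are ``yes'', and I set $A_i := A^{C_j}_{C_i}$ for any $j > i$ and $B_m := A^{C_i}_{C_m}$ for any $i < m$; if (3) is also ``yes'', then for every middle cut $C_j$ we get $A_j \uplus B_j = C_j$, and the sub-sequence is a bowtie of the prescribed length $a$. The constant $g = g(a,k)$ is then the appropriate iterated Ramsey number.

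The main obstacle I expect is enforcing condition (3) on a long monochromatic sub-sequence. A submodularity argument identifies $|A^{C'}_C| + |A^C_{C'}|$ with $|\delta^+(R_C \cap R_{C'})|$, which is itself an $st$-cut and hence of size at least the minimum cut value $\lambda$. When all cuts in $\mathcal{U}$ are of minimum size ($k=\lambda$) one obtains $\alpha+\beta=k$ for free, and a further refinement forces the disjointness needed for (3). For $k > \lambda$, however, $\alpha+\beta$ may be strictly less than $k$ and (3) can fail throughout the monochromatic sub-sequence, in which case one needs a structural dichotomy specific to pairwise disjoint \emph{minimal} cuts: the minimality of each $C \in \mathcal{U}$ should preclude $\alpha+\beta<k$ from persisting across a large family, via a counting argument on edges in the induced vertex partition $(R_C \cap R_{C'},\, R_C \setminus R_{C'},\, R_{C'} \setminus R_C,\, V \setminus (R_C \cup R_{C'}))$. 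Resolving this dichotomy is the most delicate part of the proof and is where the ``minimal of size $k$'' hypothesis is crucially used.
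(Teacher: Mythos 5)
Your Ramsey scaffolding (stabilize the pair statistics $|A^{C'}_C|$, extract a transitive sub-tournament, then stabilize the sets $A^{C'}_C$ along the order) is the same iterated-Ramsey strategy the paper uses, but the argument is incomplete precisely at the point you yourself flag as ``the most delicate part'': you never rule out the degenerate outcome in which your condition (3) fails throughout a large sub-family. Concretely, if for all triples the sets $A^{C_i}_{C_j}$ and $A^{C_m}_{C_j}$ fail to partition $C_j$, then after one more refinement you obtain a large sub-family together with a distinguished edge $e(C)\in C$ for each cut $C$ whose tail is reachable from $s$ in $G-C'$ for \emph{every} other cut $C'$ in the sub-family, or for \emph{none} of them. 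The paper isolates exactly this configuration under the name \emph{flower}; its Theorem~\ref{thm:maze} is the dichotomy ``a large maze contains an $a$-bowtie or a large flower,'' and the bowtie theorem follows only because flowers of pairwise disjoint minimal $st$-cuts are bounded (Lemma~\ref{lem:flower}). That bound is \emph{not} obtained by submodularity or by counting edges across the four-part vertex partition $(R_C\cap R_{C'},\,R_C\setminus R_{C'},\,R_{C'}\setminus R_C,\,V\setminus(R_C\cup R_{C'}))$: such arguments are purely pairwise and cannot explain why a ``bad'' edge cannot persist across arbitrarily many cuts. It is obtained from the anti-isolation lemma (Lemma~\ref{anti-isolation}): in the $\zeta=\top$ case the tails of the distinguished edges are pairwise distinct vertices, each of which is the unique one among them reachable from $s$ in $G$ minus its own cut, and the anti-isolation lemma (itself proved via important separators) caps their number at $(k+1)4^{k+1}$; the $\zeta=\bot$ case reduces to this by reversing all arcs and swapping $s$ and $t$.

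So the gap is a missing idea, not a missing computation: the hypothesis ``pairwise disjoint \emph{minimal} cuts of size $k$'' enters the proof through the anti-isolation/important-separator bound on flowers, not through any per-pair edge count, and your proposed ``structural dichotomy'' is left unproved. Until that case is closed the proposal does not establish the theorem. (A smaller inaccuracy: submodularity gives $|A^{C'}_C|+|A^{C}_{C'}|=|\delta^+(R_C\cap R_{C'})|\ge\lambda$, so even when $k=\lambda$ you only get $\alpha+\beta\ge k$ rather than equality, and disjointness of the two candidate parts of a middle cut still needs a separate argument; but this is moot next to the main gap.)
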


The next two lemmata are key observations to
  prove Theorems~\ref{thm:bnd-weights} and~\ref{thm:bnd-chain}, respectively,
        with the help of Theorem~\ref{thm:bowtie-cuts}.
\begin{lemma}\label{lem:bowtie-weighted}
Let $k$, $g$, $G$, $s$, $t$, $\mathcal{F}$, and $\mathcal{G}$ be as in the statement
  of Theorem~\ref{thm:bnd-weights}.
Then $\mathcal{G}$ does not contain an $a$-bowtie for $a > \binom{k+2}{2}$.
\end{lemma}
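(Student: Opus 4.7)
The plan is to use submodularity of the directed cut function (both for cardinality and weight), combined with the bowtie's partitioning structure, to show that in any bowtie $C_1,\ldots,C_a$ contained in $\mathcal{G}$ the pairs $(|A_i|,|B_i|)$ are pairwise distinct. Since each such pair satisfies $|A_i|+|B_i|=k'$, the common size of the bowtie cuts (with $k'\leq k$), the number of pairs is at most $k'+1$ and hence at most $\binom{k+2}{2}$.

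First I would fix such a bowtie and let $R_i$ denote the set of vertices reachable from $s$ in $G-C_i$, so that $C_i=\delta^+(R_i)$ by minimality. Because every $C_i\in\mathcal{G}$, no cut of $\mathcal{F}$ is strictly closer to $t$ than $C_i$, so the $R_i$ are pairwise incomparable---in particular $R_i\cup R_j\supsetneq R_i$ for all $i\neq j$. Next I would unpack the bowtie definition to compute $\delta^+(R_i\cap R_j)$ exactly for $i<j$: an edge leaving $R_i\cap R_j$ whose head lies outside both $R_i$ and $R_j$ would have to belong to $C_i\cap C_j=\emptyset$, which is impossible, so the remaining edges split into those of $C_i$ with tail in $R_j$ (namely $A_i$) and those of $C_j$ with tail in $R_i$ (namely $B_j$). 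Hence $\delta^+(R_i\cap R_j)=A_i\uplus B_j$ and $|\delta^+(R_i\cap R_j)|=|A_i|+|B_j|$.

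The heart of the argument is a contradiction for the distinctness claim: assume $(|A_i|,|B_i|)=(|A_j|,|B_j|)$ for some $i<j$. Then $|\delta^+(R_i\cap R_j)|=|A_i|+|B_i|=k'$, and cardinality submodularity gives $|\delta^+(R_i\cup R_j)|\leq|C_i|+|C_j|-k'=k'\leq k$. Hence both uncrossed cuts have cardinality at most $k$, so every minimal $st$-cut contained in either has cardinality at most $k$ and therefore weight at least $w^*$, the minimum weight among minimal cuts of cardinality $\leq k$. Positivity of weights lifts this bound to the uncrossed cuts themselves: $w(\delta^+(R_i\cap R_j)),w(\delta^+(R_i\cup R_j))\geq w^*$. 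Weight submodularity bounds the sum by $2w^*$, so both weights equal $w^*$. Then $D:=\delta^+(R_i\cup R_j)$ is itself a minimal $st$-cut (its reachable-from-$s$ set in $G-D$ is exactly $R_i\cup R_j$), with $|D|\leq k$ and $w(D)=w^*$, so $D\in\mathcal{F}$; since $R_D=R_i\cup R_j\supsetneq R_i$, $D$ is strictly closer to $t$ than $C_i$, contradicting $C_i\in\mathcal{G}$.

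The hard part will be the careful handling of $w^*$: it is defined as the minimum weight among minimal cuts of cardinality at most $k$, not as the global minimum cut weight in $G$, so the lower bound $w^*$ can be applied to $w(\delta^+(R_i\cap R_j))$ and $w(\delta^+(R_i\cup R_j))$ only after their cardinalities have been pinned down to be at most $k$; this is exactly where the cardinality submodularity step is used as a prerequisite for the weight submodularity step. Once distinctness is established, the pairs $(|A_i|,|B_i|)$ lie in $\{(a,b):a,b\geq 0,\,a+b\leq k\}$, a set of size $\binom{k+2}{2}$, giving the claim.
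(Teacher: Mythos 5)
Your proof is correct and follows essentially the same route as the paper's: pigeonhole on the pairs $(|A_i|,|B_i|)$, then uncross $C_i$ and $C_j$ into $\delta^+(R_i\cap R_j)=A_i\uplus B_j$ and $\delta^+(R_i\cup R_j)$ to produce a minimum-weight cut of cardinality at most $k$ that is strictly closer to $t$, contradicting $C_i\in\mathcal{G}$ (you phrase the weight accounting via submodularity inequalities, the paper via the exact repartition $C_i\uplus C_j=(A_i\cup B_j)\uplus(A_j\cup B_i)$, which is the same computation). One nitpick: $\delta^+(R_i\cup R_j)$ need not itself be a minimal $st$-cut (having reachable set exactly $R_i\cup R_j$ does not imply minimality), but this is harmless since a minimal subcut of it still has cardinality at most $k$, weight exactly $w^*$ by positivity, and reachable set containing $R_i\cup R_j$ -- a repair you have already set up earlier in the argument.
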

\begin{proof}
Assume the contrary, let $(C_i,A_i,B_i)_{i=1}^a$ be such a bowtie.
Since $a > \binom{k+2}{2}$, there exists $i < j$ with $|A_i| = |A_j|$ and $|B_i| = |B_j|$
(there are $\binom{k+2}{2}$ choices for $(|A_i|, |B_i|)$).
However, then $A_i \cup B_j$ and $A_j \cup B_i$ have also cardinality $k$, are $st$-cuts, and have together twice the minimum weight.
Furthermore, the set of vertices reachable from $s$ in $G-(A_j \cup B_i)$
is a strict superset of the set of vertices reachable from $s$ in $G-C_i$ and $G-C_j$.
This contradicts the fact that $C_i,C_j \in \mathcal{G}$.
\end{proof}

\begin{lemma}\label{lem:bowtie-chain}
Let $k$, $\ell$, $I=(G,s,t,(P_i)_{i=1}^m,k)$, $\mathcal{F}$, and $\mathcal{G}$
  be as in the statement
  of Theorem~\ref{thm:bnd-chain}.
Then $\mathcal{G}$ does not contain a $4$-bowtie $(C_i,A_i,B_i)_{i=1}^4$
in which the edge set of every path $P_j$ intersects at most one cut $C_i$.
\end{lemma}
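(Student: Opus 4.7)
The plan is to assume for contradiction that $\mathcal{G}$ contains a $4$-bowtie $(C_i,A_i,B_i)_{i=1}^4$ satisfying the path-disjointness hypothesis and derive a contradiction concentrated on the two middle cuts $C_2$ and $C_3$. Throughout, I write $R_i$ for the set of vertices reachable from $s$ in $G-C_i$ (so $C_i = \delta^+(R_i)$) and, for an edge set $S$, set $\pi(S) := \{j : S \cap P_j \neq \emptyset\}$. The first step is to check that the $R_i$'s are pairwise incomparable under inclusion: $R_i \subsetneq R_j$ would make $C_j$ strictly closer to $t$ than $C_i$, violating $C_i \in \mathcal{G}$, while $R_i = R_j$ with $i \neq j$ would force $C_i = \delta^+(R_i) = C_j$, violating pairwise edge-disjointness of the bowtie. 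Consequently $R_i \cup R_j \supsetneq R_i$ for every $i < j$, and a direct unfolding of the bowtie definition identifies $\delta^+(R_i \cup R_j) = B_i \cup A_j$ and $\delta^+(R_i \cap R_j) = A_i \cup B_j$.

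The heart of the argument is a structural claim: for each middle index $i \in \{2,3\}$ the sets $\pi(A_i)$ and $\pi(B_i)$ are disjoint. I describe the case $i = 2$; the case $i = 3$ is symmetric. From the bowtie consistency, $A_2 = \{e \in C_2 : \mathrm{tail}(e) \in R_3\} = \{e \in C_2 : \mathrm{tail}(e) \in R_4\}$ and $B_2 = \{e \in C_2 : \mathrm{tail}(e) \in R_1\}$ partition $C_2$. Because ``$\mathrm{tail} \in R_3$'' and ``$\mathrm{tail} \in R_4$'' are equivalent predicates on $C_2$, the tail of every $A_2$-edge lies in $R_2 \cap R_3 \cap R_4 \setminus R_1$ while the tail of every $B_2$-edge lies in $R_1 \cap R_2 \setminus (R_3 \cup R_4)$. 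Suppose now that some $P_j$ contains both an $A_2$-edge with tail at position $p_A$ along $P_j$ and a $B_2$-edge at position $p_B$. The path-disjointness hypothesis guarantees that $P_j$ avoids $C_1, C_3, C_4$, so as one traverses $P_j$ the memberships in each of $R_1, R_3, R_4$ are non-decreasing. Since the tail of the $A_2$-edge belongs to $R_3$ while the tail of the $B_2$-edge does not, $R_3$-monotonicity forces $p_A > p_B$; since the tail of the $B_2$-edge belongs to $R_1$ while the tail of the $A_2$-edge does not, $R_1$-monotonicity forces $p_A < p_B$, the desired contradiction.

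Combining the claim with the hypothesis that $I$ is a no-instance for budget $k-1$ gives $|\pi(C_i)| = |\pi(A_i)| + |\pi(B_i)| = k$ for $i \in \{2,3\}$. On the one hand, the cut $B_2 \cup A_3 = \delta^+(R_2 \cup R_3)$ has reachability set strictly containing $R_2$, so it cannot be a solution (else $C_2 \notin \mathcal{G}$); hence $|\pi(B_2 \cup A_3)| \geq k+1$, and since $\pi(C_2) \cap \pi(C_3) = \emptyset$ by path-disjointness this rewrites as $|\pi(B_2)| + |\pi(A_3)| \geq k+1$. On the other hand, $A_2 \cup B_3 = \delta^+(R_2 \cap R_3)$ is an $st$-cut, so $|\pi(A_2)| + |\pi(B_3)| \geq k$. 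Adding the two inequalities yields $|\pi(C_2)| + |\pi(C_3)| \geq 2k+1$, contradicting the equality $2k$ established above. I anticipate the main obstacle to be the careful bookkeeping of the bowtie-consistency conditions that yields the two tail characterizations in the second paragraph; the fact that one needs at least one other cut on each side of both $C_2$ and $C_3$ to set up this monotonicity crossing is precisely what makes a $4$-bowtie (rather than a $2$- or $3$-bowtie) the right hypothesis.
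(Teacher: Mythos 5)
Your proposal is correct and follows essentially the same route as the paper's proof: first show via the outer cuts $C_1$ and $C_4$ (monotonicity of reachability along a path avoiding them) that no path carries both an $A_i$-edge and a $B_i$-edge for $i\in\{2,3\}$, then uncross $C_2$ and $C_3$ and count path-costs to contradict membership in $\mathcal{G}$. The only quibble is that $\delta^+(R_i\cup R_j)=B_i\cup A_j$ and $\delta^+(R_i\cap R_j)=A_i\cup B_j$ should really be containments $\subseteq$, which suffices for your cost bounds; incidentally, your accounting correctly identifies $B_2\cup A_3$ as the cut closer to $t$, where the paper's last sentence has a slip naming $A_2\cup B_3$ instead.
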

\begin{proof}
Assume the contrary. Let $(C_i, A_i, B_i)_{i=1}^4$ be such a $4$-bowtie.
Consider $i \in \{2,3\}$ and two edges $e \in A_i$ and $f \in B_i$.
In $G-C_4$, the edge $e$ is reachable from $s$ while $f$ is not; consequently, $e$ and $f$ cannot appear on the same input path with $e$ being earlier
(by assumption, $C_4$ is disjoint from the input path in question).
A similar reasoning for $G-C_1$ shows that $e$ and $f$ cannot appear on the same input path with $f$ being earlier than $e$.

Hence, $e$ and $f$ cannot appear together on a single path $P_j$.
For a set of edges $D$, by the \emph{cost} of $D$ we denote $|\{j | D \cap P_j \neq \emptyset\}|$.
Since the choice of $e$ and $f$ was arbitrary, we infer that
the sum of costs of $A_2 \cup B_3$ and of $A_3 \cup B_2$ equals the sum of costs of $C_2$ and of $C_3$.
Hence, both these $st$-cuts have minimum cost.
However, $A_2 \cup B_3$ is closer to $t$ than $C_2$, a contradiction.
\end{proof}

\begin{proof}[Proof of Theorem~\ref{thm:bnd-weights}.]
  Assume $|\mathcal{G}| > g$ for some sufficiently large $g$ to be fixed later.
  For $i \in [k]$, let $\mathcal{G}^i$ be the set of $u \in \mathcal{G}$ of cardinality $i$.
  We apply the Sunflower Lemma to the largest set $\mathcal{G}^i$:
  If $g > k \cdot k! g_1^k$ for some integer $g_1$ to be chosen later, there exists
   $\mathcal{G}_1 \subseteq \mathcal{G}$ with $|\mathcal{G}_1| > g_1$, every element of $\mathcal{G}_1$ being of the same size $k'$, and a set $c$
     such that $u \cap v = c$ for every distinct $u,v \in \mathcal{G}_1$.

 Let $\widehat{k} = k' - |c|$, $\widehat{u} = u \setminus c$ for every $u \in \mathcal{G}_1$,
 $\widehat{\mathcal{G}}_1 = \{\widehat{u}~|~u \in \mathcal{G}_1\}$ and $\widehat{G} = G-c$.
 Since every $u \in \maze$ is a minimal $st$-cut of size $k'$ in $G$,
 every $\widehat{u} \in \widehat{\mathcal{G}}_1$ is a minimal $st$-cut of size $\widehat{k}$ in $\widehat{G}$.
   Furthermore, every $\widehat{u} \in \widehat{\mathcal{G}}_1$ is a minimal $st$-cut of size at most $\widehat{k}$ in $\widehat{G}$ of minimum possible weight:
   if there existed an $st$-cut $\widehat{x}$ of smaller weight and cardinality at most $\widehat{k}$, then
   $x = \widehat{x} \cup c$ would be an $st$-cut in $G$ of cardinality at most $k$ and weight smaller than every element of $\mathcal{G}_1$.
Similarly, if there were a minimal $st$-cut $\widehat{x}$ in $\widehat{G}$ of minimum weight and cardinality at most $\widehat{k}$
that is closer to $t$ than $\widehat{u}$ for some $\widehat{u} \in \widehat{\mathcal{G}}_1$, then $\widehat{x} \cup c$ would be an
$st$-cut in $G$ of cardinality at most $k$ and minimum weight that is closer to $t$ than $u$, a contradiction.
By construction, the elements of $\widehat{\mathcal{G}}_1$ are pairwise disjoint.

Lemma~\ref{lem:bowtie-weighted} bounds the maximum possible size of a bowtie in $\widehat{\mathcal{G}}_1$.
Hence, Theorem~\ref{thm:bowtie-cuts} asserts that $\widehat{\mathcal{G}}_1$ has size bounded by a function of $k$.
This finishes the proof of the theorem.
\end{proof}

\begin{proof}[Proof of Theorem~\ref{thm:bnd-chain}.]
We proceed similarly as in the proof of Theorem~\ref{thm:bnd-weights}, but we need to be a bit more careful
with the paths $P_j$. Assume $|\mathcal{G}| > g$ for some sufficiently large integer $g$.

As before, we partition $\mathcal{G}$ according to the sizes of elements: for every $i \in [k\ell]$, let $\mathcal{G}^i = \{u \in \mathcal{G}~|~|u|=i\}$.
Let $i \in [k\ell]$ be such that $|\mathcal{G}^i| > g/(k\ell)$.
For $u \in \mathcal{G}^i$, let $J(u) = \{j~|~u \cap P_j \neq \emptyset\}$.
By the assumptions of the theorem, every set $J(u)$ is of cardinality exactly $k$.
We apply the Sunflower Lemma to $\{J(u)~|~u \in \mathcal{G}^i\}$:
If $g > (k\ell) \cdot k! \cdot g_1^k$ for some integer $g_1$ to be fixed later,
then there exists $\mathcal{G}_1 \subseteq \mathcal{G}^i$ of size larger than $g_1$ and a set $I \subseteq [m]$
such that for every distinct $u,v \in \mathcal{G}_1$ we have $J(u) \cap J(v) = I$.
For every $u \in \mathcal{G}_1$, let $u_I = u \cap \bigcup_{j \in I} P_j$.
Since $|I| \leq k$, there are at most $2^{k\ell}$ choices for $u_I$ among elements $u \in \mathcal{G}_1$.
Consequently, there exists $\mathcal{G}_2 \subseteq \mathcal{G}_1$ of cardinality larger than $g_2 := g_1 / 2^{k\ell}$
such that $u_I = v_I$ for every $u,v \in \mathcal{G}_2$. Denote $c = u_I$ for any $u \in \mathcal{G}_2$.

Let $\widehat{u} := u-c$ for every $u \in \mathcal{G}_2$.
Let $\widehat{\mathcal{G}}_2 = \{\widehat{u}~|~u \in \mathcal{G}_2\}$.

Define now $\widehat{G} = G-c$ and define a partition $\widehat{\mathcal{P}}$ of $E(\widehat{G})$ into paths of length at most $\ell$
as follows: we take all paths $P_i$ for $i \notin I$ and, for every $i \in I$, each edge of $P_i \setminus c$ as a length-$1$ path.
Furthermore, denote $\widehat{k} = k - |I|$.
Note that $(\widehat{G},s,t,\widehat{\mathcal{P}},\widehat{k})$ is a \textsc{Chain $\ell$-SAT} instance for
which every $\widehat{u} \in \widehat{\mathcal{G}}_2$ is a solution.
Furthermore, $(\widehat{G},s,t,\widehat{\mathcal{P}},\widehat{k}-1)$ is a no-instance, as if
$\widehat{x}$ were its solution, then $\widehat{x} \cup c$ would be a solution to $(G,s,t,(P_i)_{i=1}^m,k-1)$, a contradiction.
Similarly, if there were a solution $\widehat{x}$ to $(\widehat{G},s,t,\widehat{\mathcal{P}},\widehat{k})$
that is closer to $t$ than $\widehat{u}$ for some $\widehat{u} \in \widehat{\mathcal{G}}_2$, then $\widehat{x} \cup c$ would be a solution
to $(G,s,t,(P_i)_{i=1}^m,k)$ that is closer to $t$ than $u$, a contradiction.
Furthermore, by construction, the elements of $\widehat{\mathcal{G}}_2$ are pairwise disjoint and no path of $\widehat{\mathcal{P}}$ intersects
more than one element of $\widehat{\mathcal{G}}_2$.

Lemma~\ref{lem:bowtie-chain} bounds the maximum possible size of a bowtie in $\widehat{\mathcal{G}}_2$.
Hence, Theorem~\ref{thm:bowtie-cuts} asserts that $\widehat{\mathcal{G}}_2$ has size bounded by a function of $k$ and $\ell$.
This finishes the proof of the theorem.
\end{proof}

We now focus on the proof of Theorem~\ref{thm:bowtie-cuts}.
To this end, we need to introduce some abstract notions.

Let $k$ be an integer.
A \emph{$k$-maze} is a family $\maze$ of pairwise disjoint sets of size $k$,
together with a function $f_{u,v} : u \to \{\bot,\top\}$
for every ordered pair $u,v \in \maze$, $u \neq v$.
A \emph{flower} in a $k$-maze $\maze$ is a subset $\mathcal{F} \subseteq \maze$ such that there exists a value $\zeta \in \{\bot,\top\}$ and an element $e(u) \in u$ for every $u \in \mathcal{F}$
such that $f_{u,v}(e(u)) = \zeta$ for every $u,v \in \mathcal{F}$, $u \neq v$.
An \emph{$a$-bowtie} in a $k$-maze $\maze$ is a sequence $u_1,u_2,\ldots,u_a$ of pairwise distinct elements of $\maze$ together with a partition $u_i = a_i \uplus b_i$ of every set in the sequence, such that
  the following holds for every $1 \leq i,j \leq a$, $i \neq j$:
  \begin{itemize}
  \item if $i < j $ then $f_{u_i, u_j}(e) = \bot$ for $e \in a_i$ and $f_{u_i, u_j}(e) = \top$ for $e \in b_i$;
  \item if $i > j $ then $f_{u_i, u_j}(e) = \top$ for $e \in a_i$ and $f_{u_i, u_j}(e) = \bot$ for $e \in b_i$.
  \end{itemize}

We need two basic observations.
First, flowers in restrictions project to flowers in original mazes. That is, given a $k$-maze $\maze$ and a set $\hat{u} \subseteq u$ for every $u \in \maze$
with $|\hat{u}| = \hat{k}$, the natural restrictions of the functions $f_{u,v}$ give a structure of a $\hat{k}$-maze on $\hat{\maze} := \{\hat{u} : u \in \maze\}$.
It is immediate from the definition that a flower in $\hat{\maze}$ projects back to a flower in $\maze$.
Second, a reverse of a bowtie is a bowtie as well, but one needs
to swap the roles of $a_i$ and $b_i$. That is, one can check directly from the definition that if
$(u_i,a_i,b_i)_{i=1}^a$ is an $a$-bowtie, then $(u_i,b_i,a_i)_{i=a}^1$ is an $a$-bowtie as well.

An iterated Ramsey argument shows the following.
\begin{theorem}\label{thm:maze}
For every integers $k,a,b \geq 1$ there exists an integer $g$ such that any $k$-maze of size at least $g$
  contains either an $a$-bowtie or a flower of size at least $b$.
\end{theorem}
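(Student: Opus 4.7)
The plan is to apply Ramsey's theorem for pair-colorings directly, with no induction on $k$. Fix an arbitrary linear order $\prec$ on $\maze$ and, for each $u \in \maze$, a bijection $\pi_u \colon u \to [k]$. For every unordered pair $\{u,v\} \subseteq \maze$ with $u \prec v$, assign the color
$c(\{u,v\}) = (\pi_u(f_{u,v}^{-1}(\bot)),\, \pi_v(f_{v,u}^{-1}(\bot))) \in 2^{[k]} \times 2^{[k]}$,
so the palette has $4^k$ colors. Set $m = \max(a,b)$ and let $g$ be the Ramsey number guaranteeing a monochromatic clique of size $m$ in a $4^k$-coloring of pairs. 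If $|\maze| \geq g$, we extract a subset $\mathcal{F}' \subseteq \maze$ of size $m$ on which $c$ is constant, equal to some fixed $(S_1, S_2)$.

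The remainder of the proof extracts either an $a$-bowtie or a $b$-flower from $\mathcal{F}'$, according to how $S_1$ and $S_2$ interact. The crucial dichotomy is whether $S_1$ and $S_2$ form a partition of $[k]$, i.e., whether $S_1 = [k] \setminus S_2$.

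If $S_1 = [k] \setminus S_2$, list any $u_1 \prec u_2 \prec \cdots \prec u_a$ from $\mathcal{F}'$ and set $\alpha_i = \pi_{u_i}^{-1}(S_1)$ and $\beta_i = \pi_{u_i}^{-1}(S_2)$, yielding a partition $u_i = \alpha_i \uplus \beta_i$. For $i < j$, monochromaticity gives $f_{u_i, u_j}^{-1}(\bot) = \pi_{u_i}^{-1}(S_1) = \alpha_i$, so $f_{u_i,u_j}$ sends $\alpha_i$ to $\bot$ and $\beta_i$ to $\top$; similarly $f_{u_j, u_i}^{-1}(\bot) = \pi_{u_j}^{-1}(S_2) = \beta_j$, so $f_{u_j,u_i}$ sends $\alpha_j$ to $\top$ and $\beta_j$ to $\bot$. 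This is exactly the $a$-bowtie condition. Otherwise, either $S_1 \cap S_2 \neq \emptyset$ or $[k] \setminus (S_1 \cup S_2) \neq \emptyset$. In the first subcase, pick $\iota \in S_1 \cap S_2$, take any $b$ elements of $\mathcal{F}'$, and set $e(u) = \pi_u^{-1}(\iota)$ with $\zeta = \bot$: whenever $u \prec v$ are both in this subset, $f_{u,v}(e(u)) = \bot$ because $\iota \in S_1$, and $f_{v,u}(e(v)) = \bot$ because $\iota \in S_2$. The chosen $b$ elements form a $b$-flower. The second subcase is symmetric, using any $\iota \in [k] \setminus (S_1 \cup S_2)$ and $\zeta = \top$.

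I do not foresee a serious technical obstacle. The one point that requires care is that the Ramsey coloring and the witness constructions both rely on the arbitrary auxiliary data $\prec$ and $(\pi_u)_{u \in \maze}$, while the definitions of flower and bowtie are invariant under this data; one has to be mindful of the asymmetric roles played by $S_1$ and $S_2$ (corresponding respectively to $u$ appearing as the $\prec$-smaller or $\prec$-larger endpoint of a pair) so that the bowtie's forward/backward pattern and the flower's uniform $\zeta$ are verified in both directions at once. Once this bookkeeping is in place, each case reduces to a set-theoretic identification inside $[k]$, and the bound $g = R(\max(a,b); 4^k)$ follows.
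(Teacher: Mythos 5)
Your proof is correct, and it takes a genuinely different route from the paper's. The paper proves the theorem by induction on $k$: it picks one distinguished element $e_u$ from each set, applies Ramsey with four colours to the pattern $(f_{u,v}(e_u), f_{v,u}(e_v))$, reads off a flower from a red clique, and on a blue clique recurses on the $(k-1)$-maze of the remaining elements; the recursion forces an extra Erd\H{o}s--Szekeres step (whence the $(a-1)^2+1$) to re-align the inner bowtie's order with the order $\prec$ used for the blue colouring, together with the observation that reversing a bowtie yields a bowtie. You instead encode, for each $\prec$-ordered pair, the entire $\bot$-preimages of both directions as a single colour from a palette of size $4^k$, apply Ramsey once, and then do a purely set-theoretic case split on the resulting pair $(S_1,S_2)$: the partition case $S_1=[k]\setminus S_2$ gives the bowtie directly in $\prec$-order (no re-alignment needed), and the two failure modes $S_1\cap S_2\neq\emptyset$ and $S_1\cup S_2\neq[k]$ each give a flower, with the index $\iota$ correctly chosen so that the required value of $f_{u,v}(e(u))$ holds regardless of whether $u$ is the smaller or larger endpoint. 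Your verification of both the bowtie conditions (first coordinate of the colour for $i<j$, second for $i>j$) and the flower condition (both orderings of each pair) is sound. What the single-shot argument buys is a much better bound: $g = R(\max(a,b); 4^k)$ is roughly doubly exponential in $k$, whereas the paper's $k$-fold nesting of Ramsey numbers yields a tower of height $k$; since the theorem is used purely existentially this does not affect the paper's results, but your argument is shorter and quantitatively stronger.
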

\begin{proof}
We prove Theorem~\ref{thm:maze} by
induction over $k$ and denote the resulting value $g$ as $g(k, a, b)$.
Pick $k,a,b \geq 1$ and pick a $k$-maze $\maze$.
For every $u \in \maze$ pick an arbitrary element $e_u \in u$ and denote
$\hat{u} = u \setminus \{e_u\}$.

Fix an arbitrary total order $\prec$ on $\maze$.
Consider a complete graph $H$ with vertex set $\maze$ and edges with the following annotations for distinct $u,v \in \maze$ with $u \prec v$.
\begin{itemize}
\item An edge $uv$ is \emph{light red} if $f_{u,v}(e_u) = f_{v,u}(e_v) = \bot$.
\item An edge $uv$ is \emph{dark red} if $f_{u,v}(e_u) = f_{v,u}(e_v) = \top$.
\item An edge $uv$ is \emph{light blue} if $f_{u,v}(e_u) = \bot$ and $f_{v,u}(e_v) = \top$.
\item An edge $uv$ is \emph{dark blue} if $f_{u,v}(e_u) = \top$ and $f_{v,u}(e_v) = \bot$.
\end{itemize}
We say that an edge $uv$ is \emph{red} if it is light or dark red, and similarly for blue.

By Ramsey's theorem, if $\maze$ is large enough, we have one of the following objects.
\begin{itemize}
\item A light red clique of size $b$ or a dark red clique of size $b$.
  But such a clique is in fact a flower of size $b$
  with $e(u) = e_u$ and $\zeta = \bot$ if it is a light red clique or $\zeta= \top$
  if it is a dark red clique.
\item A light blue clique or a dark blue clique of size $m_1$, for some integer $m_1$ to be
  fixed later. We denote this clique by $\mathcal{C}$ and proceed further.
\end{itemize}
If $k=1$, we take $m_1 = a$ and observe that such a blue clique is an $a$-bowtie.

For $k > 1$, we define a $(k-1)$-maze $\hat{\maze} := \{\hat{u} : u \in \mathcal{C}\}$.
We take $m_1 = g(k-1,m_2,b)$ where $m_2 = (a-1)^2 + 1$
and apply the inductive assumption to $\hat{\maze}$, obtaining either an $m_2$-bowtie
or a $b$-flower.
If the inductive assumption gives a flower, it projects back to a flower in $\maze$.
Otherwise, we have an $m_2$-bowtie
$(\hat{u}_i, \hat{a}_i, \hat{b}_i)_{i=1}^{m_2}$.
Since $m_2 = (a-1)^2 + 1$, there exist indices $1 \leq i_1 < i_2 < ... < i_a \leq m_2$ such that
$(u_{i_j})_{j=1}^a$ is ordered increasingly or decreasingly by $\prec$.
Then, by the definition of light and dark blue edges,
  either
$(u_{i_j}, \hat{a}_{i_j} \cup \{e_{u_{i_j}}\}, \hat{b}_{i_j})_{j=1}^a$
or
$(u_{i_j}, \hat{a}_{i_j}, \hat{b}_{i_j} \cup \{e_{u_{i_j}}\})_{j=1}^a$
is an $a$-bowtie. This finishes the proof.
\end{proof}

Let us now relate the above abstract notions and results
to cuts of cardinality at most $k$ in a directed
graph $G$ with distinguished vertices $G$, $s$, and $t$.
For two disjoint minimal $st$-cuts $C, D$ of size $k$,
we can define $f_{C,D} : C \to \{\bot,\top\}$ as follows: $f_{C,D}(e) = \bot$ if the tail of $e$ is reachable from $s$ in $G-D$, and $f_{C,D}(e) = \top$ otherwise.
This definition gives a structure of a $k$-maze on a family $\maze$
of pairwise disjoint minimal $st$-cuts of size $k$.
Furthermore, a direct check from the definitions show that the two notions of a \emph{bowtie}
are equivalent.
By Theorem~\ref{thm:maze}, if such a family $\maze$ is large enough, it contains an $a$-bowtie
(which is the conclusion of Theorem~\ref{thm:bowtie-cuts}) or a
flower of size $b$.
To conclude the proof of Theorem~\ref{thm:bowtie-cuts}, it remains to show the following.
\begin{lemma}\label{lem:flower}
Let $G$ be a directed graph with two distinguished vertices $s$ and $t$ and let $k$
be an integer. Let $\maze$ be a family of pairwise disjoint minimal $st$-cuts of size $k$
with the aforementioned structure of a $k$-maze.
Then every flower in $\maze$ has cardinality at most $(k+1)4^{k+1}$.
\end{lemma}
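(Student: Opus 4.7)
The plan is to couple the abstract flower condition with the cut structure of the graph. I give the argument for $\zeta=\bot$; the case $\zeta=\top$ is handled symmetrically, by reversing every edge of $G$ and swapping the roles of head and tail. Unpacking the definition, a flower $\mathcal{F}$ consists of pairwise disjoint minimal $st$-cuts of size $k$ together with a distinguished edge $e(u)=(t(u),h(u))\in u$ for each $u$, such that $t(u)\in R_v$ for every $v\in\mathcal{F}\setminus\{u\}$, where $R_v$ denotes the set of vertices reachable from $s$ in $G-v$.

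The decisive first step is to show that the sets $R_u$ for $u\in\mathcal{F}$ are pairwise incomparable under inclusion. If instead $R_u\subsetneq R_v$ for some $u\neq v$, then the flower condition gives $t(v)\in R_u$, while $h(v)\notin R_v\supseteq R_u$; this forces $e(v)\in\delta^+(R_u)=u$, contradicting $u\cap v=\emptyset$. Two consequences follow: every tail $t(u)$ lies in $S:=\bigcap_{v\in\mathcal{F}}R_v$, and the heads $h(u)$ occupy pairwise disjoint \emph{petals} $D_u:=\bigl(\bigcap_{v\neq u}R_v\bigr)\setminus R_u$. In particular the $n:=|\mathcal{F}|$ distinguished edges are pairwise distinct elements of $\delta^+(S)$.

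To convert this into a numerical bound, I would iterate the submodular inequality $f(A\cap B)+f(A\cup B)\leq f(A)+f(B)$ for the cut function $f(R)=|\delta^+(R)|$, obtaining $|\delta^+(S)|\leq nk-(n-1)\lambda$, where $\lambda$ is the minimum $st$-cut size of $G$. Combined with the lower bound $|\delta^+(S)|\geq n$ from the previous step, this already forces $n\leq k$ in the tight regime $\lambda=k$. To attain the general bound $(k+1)4^{k+1}$ I would proceed by induction on $k$: fix a minimum $st$-cut $M$ of $G$ of size $\lambda<k$ and classify each $u\in\mathcal{F}$ by how its $k$ edges---and in particular the distinguished edge $e(u)$---split across the two sides of $M$. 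There are at most $4^{|M|}\leq 4^k$ such patterns; within a single class one can modify the graph (by contracting or deleting appropriate edges of $M$ and rerouting through the incident vertices) to obtain a sub-instance whose minimum cut has strictly grown or whose parameter $k$ has strictly dropped, so that the inductive hypothesis applies.

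The main obstacle will be making this classification and reduction rigorous: verifying that each pattern class still forms a flower in the modified graph, that pairwise disjointness and minimality of the cuts are preserved under the reduction, and that the recurrence multiplies at most a factor of $4$ per level, telescoping over the at most $k$ levels of recursion to $4^{k+1}$, with the extra factor $k+1$ absorbing the polynomial base-case term.
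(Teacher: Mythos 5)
Your opening observations are correct and nicely done: the incomparability of the reachable sets $R_u$, the fact that all tails land in $S=\bigcap_v R_v$ and the distinguished edges are distinct elements of $\delta^+(S)$, and the iterated submodularity bound $|\delta^+(S)|\le nk-(n-1)\lambda$ are all valid. But note what this actually yields: the inequality $n\le nk-(n-1)\lambda$ is vacuous as soon as $\lambda<k$ (the coefficient of $n$ becomes non-positive), so you have only handled the tight regime $\lambda=k$. Everything beyond that point is a genuine gap. The proposed induction --- classify the cuts of $\mathcal{F}$ by their ``pattern'' relative to a fixed minimum cut $M$, then ``modify the graph by contracting or deleting appropriate edges of $M$ and rerouting'' so that either $\lambda$ grows or $k$ drops --- is not a proof: the reduction is unspecified, it is not shown that a pattern class remains a flower of pairwise disjoint minimal cuts in the modified graph, and your own accounting is inconsistent (you claim at most $4^{|M|}\le 4^k$ classes per level but then a branching factor of only $4$ per level in the telescoping). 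This unproven step is exactly where all the difficulty of the lemma sits.

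The paper does something much shorter: it reduces the lemma to the \emph{anti-isolation lemma} (Lemma~\ref{anti-isolation}, stated in the paper with a reference for its proof), which says that if every $t\in T$ admits a set $C_t$ of at most $k$ edges such that $t$ is the \emph{only} vertex of $T$ reachable from $s$ in $G-C_t$, then $|T|\le(k+1)4^{k+1}$. After reducing to $\zeta=\top$ by edge reversal (as you also propose, though you chose to present $\zeta=\bot$), the tails $t(u)$ are pairwise distinct, $t(u)$ is reachable from $s$ in $G-u$ by minimality of $u$, and $t(u)$ is \emph{not} reachable in $G-v$ for any $v\neq u$ by the flower condition; taking $T=\{t(u)\}$ and $C_{t(u)}=u$ gives the bound immediately. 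The fact that the target bound is exactly $(k+1)4^{k+1}$ is a strong hint that the intended argument is this black-box application rather than a from-scratch induction; if you insist on a self-contained proof, you are in effect re-proving the anti-isolation lemma, and your sketch does not accomplish that.
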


Lemma~\ref{lem:flower} is an easy corollary of the so-called \emph{anti-isolation lemma}
(see e.g.~\cite{DBLP:conf/soda/PilipczukW16} for a proof).
\begin{lemma}[Anti-isolation lemma]\label{anti-isolation}
Let $k$ be an integer, $G$ be a directed graph with a distinguished vertex $s$, and let
$T \subseteq V(G)$.
Assume that for every $t \in T$ there exists a set $C_t \subseteq E(G)$ of size at most $k$ such that
$t$ is the only vertex of $T$ reachable from $s$ in $G-C_t$. Then $|T| \leq (k+1)4^{k+1}$.
\end{lemma}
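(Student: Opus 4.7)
My plan is to split into two cases depending on the value $\zeta \in \{\bot,\top\}$ in the definition of a flower and, in each case, reduce directly to the anti-isolation lemma (Lemma~\ref{anti-isolation}). For $u \in \mathcal{F}$ write $t_u = \mathrm{tail}(e(u))$ and $h_u = \mathrm{head}(e(u))$, and let $R_u$ denote the set of vertices reachable from $s$ in $G - u$; since $u$ is a minimal $st$-cut, $t_u \in R_u$ and $h_u$ reaches $t$ in $G-u$. Moreover, since the cuts in $\maze$ are pairwise disjoint, the edge $e(u)$ lies outside $v$ for every $v \in \mathcal{F}\setminus\{u\}$, so whenever $t_u \in R_v$ the head $h_u$ is in $R_v$ as well (follow an $s\to t_u$ path in $G-v$ and then the edge $e(u)$, which has not been deleted).

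For the $\top$-flower case my plan is to apply Lemma~\ref{anti-isolation} with $T=\{t_u : u\in\mathcal{F}\}$ and, for every $t_u\in T$, the cut $C_{t_u} = u$ (of size $k$). The condition $f_{u,v}(e(u))=\top$ for $u\neq v$ translates to $t_u \notin R_v$; fixing $u$ and letting $v$ vary gives $t_u\in R_u$ while $t_w\notin R_u$ for every $w\in\mathcal{F}\setminus\{u\}$, so $t_u$ is the unique element of $T$ reachable from $s$ in $G-u$. Distinctness of the $t_u$'s falls out of the same isolation: if $t_u=t_v$ with $u\neq v$, then $t_v=t_u\in R_u$ would contradict $t_v\notin R_u$. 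Lemma~\ref{anti-isolation} then gives $|\mathcal{F}|=|T|\leq (k+1)4^{k+1}$.

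The $\bot$-flower case is the main obstacle, because the condition $f_{u,v}(e(u))=\bot$ reads $t_u\in R_v$, which is the opposite of an isolation statement on the $s$-side. The plan is to pass to the \emph{heads} $h_u$ and to apply anti-isolation in the reverse graph $G^R$ (equivalently, with anti-isolation stated in terms of reaching $t$ rather than being reachable from $s$, which follows from Lemma~\ref{anti-isolation} by reversing all arcs and swapping $s$ with $t$). The key observation will be: for every $u\neq v$, $h_u$ does not reach $t$ in $G-v$. Indeed, from $t_u\in R_v$ one has $h_u\in R_v$; if additionally $h_u$ reached $t$ in $G-v$, concatenating an $s\to h_u$ path with an $h_u\to t$ path in $G-v$ would produce an $st$-path in $G-v$, contradicting that $v$ is an $st$-cut. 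On the other hand, $h_u$ reaches $t$ in $G-u$ by minimality. Applying the reversed-source anti-isolation lemma to $G^R$ with sink $t$, set $T'=\{h_u : u\in\mathcal{F}\}$, and cuts $C_{h_u}=u$, and arguing distinctness of the $h_u$'s by the symmetric contradiction (if $h_u=h_v$ for $u\neq v$, then $h_v=h_u$ reaches $t$ in $G-u$, contradicting the preceding sentence), yields $|\mathcal{F}|=|T'|\leq (k+1)4^{k+1}$ and completes the proof.
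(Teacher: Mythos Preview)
Your proposal does not prove the stated anti-isolation lemma (Lemma~\ref{anti-isolation}); it \emph{uses} that lemma as a black box to prove Lemma~\ref{lem:flower} instead. The paper itself does not prove Lemma~\ref{anti-isolation} either---it imports it from~\cite{DBLP:conf/soda/PilipczukW16}---so there is nothing to compare on that front.

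Viewed as a proof of Lemma~\ref{lem:flower}, your argument is correct and essentially the same as the paper's. The paper treats only the case $\zeta=\top$ (identically to your first case: the tails $t_u$ are isolated by the cuts $u$, then invoke Lemma~\ref{anti-isolation}) and disposes of $\zeta=\bot$ by the one-line observation that reversing all arcs and swapping $s$ with $t$ turns a $\bot$-flower into a $\top$-flower with the same choices $e(u)$. Your second case---showing that the heads $h_u$ are isolated on the $t$-side and applying anti-isolation in $G^R$---is exactly this symmetry unpacked by hand: in the reversed graph the tail of $e(u)$ becomes the original head $h_u$, and ``reachable from $s$'' becomes ``reaches $t$''. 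The paper's version is shorter; yours is more explicit; neither buys anything the other does not.
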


\begin{proof}[Proof of Lemma~\ref{lem:flower}.]
  Let $\mathcal{F} \subseteq \maze$ be a flower with a value $\zeta \in \{\bot, \top\}$ and an element $e(u) \in u$ for every $u \in \mathcal{F}$
  such that $f_{u,v}(e(u)) = \zeta$ for every $v \in \mathcal{F} \setminus \{u\}$.
  Observe that if we revert all the edges of $G$ and switch the roles of $s$, and $t$, $\maze$ remains a family of minimal $st$-cuts of size $k$,
  but the values of $\bot$ and $\top$ in functions $f_{u,v}$ swap. In particular, $\mathcal{F}$ remains a flower with the same choice of $e(u)$
    for every $u \in \mathcal{F}$, but the value of $\zeta$ changes.
  Hence, without loss of generality we can assume that $\zeta = \top$.

  Fix $u \in \mathcal{F}$.
  Let $t(u)$ be the tail of $e(u)$ for every $u \in \mathcal{F}$. Since $u$ is a minimal $st$-cut, $t(u)$ is reachable from $s$ in $G-u$.
  Since $f_{u,v}(e(u)) = \zeta = \top$ for every $v \in \mathcal{F} \setminus \{u\}$, $t(u)$ is not reachable from $s$ in $G-v$.
  In particular, the tails $t(u)$ are pairwise distinct for different $u \in \mathcal{F}$.
  Lemma~\ref{anti-isolation} implies that the set $\{t(u)~|~u \in \mathcal{F}\}$ is of cardinality at most $(k+1)4^{k+1}$, which
  implies the same bound on $|\mathcal{F}|$.
\end{proof}

\section{NP-hardness of \textsc{Multi-budgeted cut}}\label{sec:np}

Although it is well-known that the minimum cut problem is polynomial-time solvable, we prove that the \textsc{Multi-budgeted cut} problem is NP-hard for $\ell\geq 2$.
\begin{lemma}\label{NP_hard}
\textsc{Multi-budgeted cut} problem is NP-hard for every $\ell \geq 2$.
\end{lemma}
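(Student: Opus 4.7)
My plan is to reduce from the NP-complete \textsc{Partition} problem to \textsc{Multi-budgeted Cut} with $\ell=2$, and then pad trivially to handle every $\ell \geq 2$. Given a \textsc{Partition} instance $(a_1,\ldots,a_n)$ of positive integers with $\sum_{i=1}^n a_i = 2S$, I would build a directed graph $G$ with a source $s$, a sink $t$, and one fresh internal vertex $u_i$ for each $i\in [n]$; for every $i$, I add $a_i$ parallel arcs from $s$ to $u_i$ all placed in $E_1$ and $a_i$ parallel arcs from $u_i$ to $t$ all placed in $E_2$ (should one insist on a simple graph, subdivide each parallel arc once and put the color on the half incident to $u_i$). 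Set $X = \{s\}$, $Y = \{t\}$, and $k_1 = k_2 = S$.

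For correctness, the key observation is the following ``all-or-nothing'' fact: in any $XY$-cut $C$, which we may assume inclusion-minimal since dropping edges only helps the budgets, for every index $i$ either \emph{all} $a_i$ arcs from $s$ to $u_i$ lie in $C$ (when $u_i$ is not reachable from $s$ in $G-C$) or \emph{all} $a_i$ arcs from $u_i$ to $t$ lie in $C$ (otherwise); any other configuration leaves some $s\to u_i\to t$ path using an unused arc on each side. Setting $I := \{i : u_i \text{ unreachable from }s \text{ in } G-C\}$, minimality forces $|C \cap E_1| = \sum_{i\in I} a_i$ and $|C\cap E_2| = \sum_{i\notin I} a_i$. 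Combining the two budget constraints $\leq S$ with $\sum_i a_i = 2S$ forces both inequalities to be equalities and $\sum_{i \in I} a_i = S$, so $I$ is a \textsc{Partition} witness. Conversely, any \textsc{Partition} witness $I$ yields the cut picking exactly the $(s,u_i)$-bundles for $i\in I$ and the $(u_i,t)$-bundles for $i\notin I$, which is budget-respecting.

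For $\ell > 2$, I would simply add, for each extra color $i \in \{3,\ldots,\ell\}$, one dummy arc on two fresh vertices that are not reachable from $s$ and cannot reach $t$, placed in $E_i$ only and with budget $k_i := 1$. These arcs play no role in any $XY$-cut but satisfy the formal requirements $E_i \neq \emptyset$ and $k_i \geq 1$, so the equivalence with \textsc{Partition} carries through. I do not expect a significant combinatorial obstacle: the only point requiring care is the all-or-nothing argument, which is immediate once one stares at a single bundle pair, together with the reduction to minimal cuts. The one caveat worth flagging is that the construction uses $\Theta(\sum_i a_i)$ arcs, so the reduction is polynomial in the unary size of the \textsc{Partition} instance; this is the standard pseudopolynomial-style argument routinely used for bicriteria cut problems and suffices for the NP-hardness statement claimed in the paper.
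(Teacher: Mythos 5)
There is a genuine gap, and it is exactly at the point you flagged as a mere caveat. Your gadget (the all-or-nothing bundle argument for a minimal cut, and the forcing of both budget inequalities to equalities via $\sum_i a_i = 2S$) is combinatorially sound, but the reduction does not establish NP-hardness because \textsc{Partition} is only \emph{weakly} NP-hard. Since \textsc{Multi-budgeted Cut} has no edge weights --- the budgets $k_i$ count cardinalities --- the only way to encode the integer $a_i$ is by $a_i$ parallel arcs, i.e.\ in unary. The constructed instance therefore has size $\Theta(\sum_i a_i)$, which is exponential in the binary encoding of the \textsc{Partition} instance, so the map is not a polynomial-time reduction. One cannot repair this by declaring the source numbers to be unary-encoded: unary \textsc{Partition} (equivalently, subset sum with polynomially bounded numbers) is in P by dynamic programming, so it cannot be the source of an NP-hardness reduction. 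The ``standard pseudopolynomial-style argument for bicriteria problems'' you invoke applies when the target problem carries numeric weights in its input that can be written in binary; that is precisely what is absent here. To make your strategy work you would have to start from a \emph{strongly} NP-hard problem.

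For comparison, the paper reduces from the constrained minimum vertex cover problem on bipartite graphs (Chen and Kanj): vertices of the top side become color-$1$ arcs out of $s$, vertices of the bottom side become color-$2$ arcs into $t$, and the original edges become uncolored (undeletable) middle arcs, with budgets $k_1=k_U$, $k_2=k_L$. That source problem is purely combinatorial (no large numbers), so the reduction is genuinely polynomial, and the correspondence between vertex covers and budget-respecting cuts is immediate. Your bundle idea is closer in spirit to weighted cut hardness proofs and would be fine for a weighted variant, but for the cardinality-budgeted problem as stated you need a different source problem.
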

\begin{proof}
We prove this lemma by making a reduction from constrained minimum vertex cover problem on bipartite graphs (\textsc{Min-CBVC}),
   which is proved to be NP-hard by Chen and Kanj \cite{DBLP:journals/jcss/ChenK03}.
In the \textsc{Min-CBVC} problem the input consists of a bipartite graph $G=(U \uplus L, E)$ and integers $k_U,k_L$; the goal is to find a vertex cover $X \subseteq U \cup L$
such that $|X \cap U| \leq k_U$ and $|X \cap L| \leq k_L$.

We reduce to a \textsc{Multi-budgeted cut} instance with $\ell = 2$. For larger values of $\ell$, it is straightforward to pad the instance as follows: for every $3 \leq i \leq \ell$,
create a new edge $e_i$ with tail in $s$ and head in $t$ and set $E_i = \{e_i\}$, $k_i = 1$. 

Given an instance $(G,k_{U},k_{L})$ of \textsc{Min-CBVC}, where $G=(U\cup L,E)$ is a bipartite graph, we construct an instance $(G',X,Y,2,(E_i,k_i)_{i=1}^2)$ of \textsc{Multi-budgeted cut} as follows.
We take $V(G') = V(G) \cup \{s,t\}$, where $s$ and $t$ are two new vertices, and set $X=\{s\}$ and $Y=\{t\}$.
Then for each vertex $u\in U$, we add an arc $(s,u)$ with color $1$ to $G'$ and for each vertex $v\in L$, we add an arc $(v,t)$ with color $2$ to $G'$.
For each undirected edge $(u,v)\in E(G)$ such that $u\in U$ and $v\in L$, we add an arc $(u,v)$ with no color. Let $E_{1}$ be the set of arcs of color $1$ in $G'$ and $E_{2}$ be the set of arcs of color $2$ in $G'$. Let $Z=E_{1}\cup E_{2}$ be the deletable arcs. Let the budgets of the \textsc{Multi-budgeted cut} instance be $k_{1}=k_{U},k_{2}=k_{L}$. This completes the construction.

Now we show that $(G,k_{U},k_{L})$ is a yes instance if and only if $(G',X,Y,2,(E_i,k_i)_{i=1}^2)$ is a yes instance. Suppose $(G,k_{U},k_{L})$ is a yes instance. Then there exists a vertex cover $U'\cup L'$ of $G$ such that $U'\subseteq U$, $L'\subseteq L$, $|U'|\leq k_{U}$ and $|L'|\leq k_{L}$. Let $C_{1}=\{(s,u)|u\in U'\}$ and $C_{2}=\{(v,t)|v\in L'\}$. We claim that $C_{1}\cup C_{2}$ is a solution for $(G',X,Y,2,(E_i,k_i)_{i=1}^2)$. Obviously $|C_{1}|\leq k_{1}$, $|C_{2}|\leq k_{2}$ and $C$ is $Z$-respecting. For contradiction, suppose that there is a directed path $su'v't$ in $G'\setminus (C_{1}\cup C_{2})$. It follows that $u'\notin U'$ and $v'\notin L'$. Thus there is an edge $u'v'$ which is not covered by $U'\cup L'$ in $G$, contradicting that $U'\cup L'$ is a vertex cover of $G$.

Suppose that $(G',X,Y,2,(E_i,k_i)_{i=1}^2)$ is a yes instance. Then there is a $Z$-respecting budget-respecting $st$-cut $C=C_{1}\cup C_{2}$ such that $C_{1}$ is a set of arcs of color $1$ of size at most $k_{1}$ and $C_{2}$ is a set of arcs of color $2$ of size at most $k_{2}$.
Obviously any arc between $U$ and $V$ in $G'$ is not in the solution because they are not deletable. Let $U'=\{u|(s,u)\in C_{1}\}$ and $L'=\{v|(v,t)\in C_{2}\}$. We get that $U'\subseteq U$, $L'\subseteq L$, $|U'|\leq k_{1}=k_{U}$ and $|L'|\leq k_{2}=k_{L}$. We claim that $U'\cup L'$ is a solution for $(G,k_{U},k_{L})$. For contradiction, suppose that there is an edge $u'v'$ not covered by $U'\cup L'$. It follows that $su'v't$ is a directed path in $G'\setminus C$, contradicting that $C=C_{1}\cup C_{2}$ is a solution for $(G',X,Y,2,(E_i,k_i)_{i=1}^2)$. This completes the proof.
\end{proof}

\section{Conclusion}

We would like to conclude with a discussion on future research directions. First, our upper bound
  of $2^{\Oh(k^2 \log k)}$ 
  on the number of multi-budgeted important separators (Theorem~\ref{thm:imp-enum}) is far from the $4^k$ bound for the classic important separators. As pointed out by an anonymous reviewer 
  at IPEC 2018, there is an easy lower bound of $k!$ for the number of multi-budgeted
  important separators: Let $\ell = k$, $k_i=1$ for every $i \in [\ell]$, and let $G$
  consist of $k$ paths from $s$ to $t$, each path consisting of $\ell$ edges of different colors.
  Then there are exactly 
  $k!$ distinct multi-budgeted important separators, as we can freely choose a different color
  $i \in [\ell]$ to cut on each path.
  We are not aware of any better lower bound, leaving a significant gap between the lower
  and upper bounds.

Second, our existential statement of Theorems~\ref{thm:bnd-weights} and~\ref{thm:bnd-chain}
can be treated as a weak support of tractability of \textsc{Chain $\ell$-SAT} and 
  \textsc{Weighted $st$-cut}. Are they really FPT when parameterized by the cardinality of the cut?


\bibliography{ColorCuts}
\end{document}